\newenvironment{remark}{{\vspace{0.5em} \noindent\it Remark.}}
\newcommand{\kw}[1]{{\ensuremath {\mathsf{#1}}}\xspace}
\newcommand{\reffig}[1]{Figure~\ref{fig:#1}}
\newcommand{\refsubsec}[1]{Section~\ref{subsec:#1}}
\newcommand{\reftab}[1]{Table~\ref{tab:#1}}
\newcommand{\refalg}[1]{Algorithm~\ref{alg:#1}}
\newcommand{\refdef}[1]{Definition~\ref{def:#1}}
\newcommand{\refthm}[1]{Theorem~\ref{thm:#1}}
\newcommand{\reflem}[1]{Lemma~\ref{lem:#1}}
\newcommand{\stitle}[1]{\noindent{\bf #1}}
\newcommand{\nbr}{N}
\newcommand{\avgdegree}{deg_{avg}}
\newcommand{\tmax}{t_{max}}
\newcommand{\gwindow}{\mathcal{W}}
\newcommand{\pair}[1]{\langle #1 \rangle}
\newcommand{\vctidx}{\mathcal{VCT}}
\newcommand{\yestime}{active}
\newcommand{\rset}{\mathcal{R}}
\newcommand{\kwnull}{\ensuremath{\kw{Null}}}
\newcommand{\kwcontinue}{\textbf{continue}}
\newcommand{\kwfalse}{\ensuremath{\kw{False}}}
\newcommand{\kwtrue}{\ensuremath{\kw{True}}}
\newcommand{\dll}{\ensuremath{\mathcal{L}}}
\newcommand{\algdlldelete}{\kw{Delete}}
\newcommand{\algdllinsert}{\kw{Insert}}
\newcommand{\edgeskyline}{\mathcal{ECS}}
\newcommand{\algbaseenum}{\kw{EnumBase}}
\newcommand{\algstartanchor}{\kw{AS\text{-}Output}}
\newcommand{\algenum}{\kw{Enum}}
\newcommand{\ct}{\ensuremath{\mathcal{CT}}}
\newcommand{\algct}{\ensuremath{\kw{CoreTime}}}
\def\BibTeX{{\rm B\kern-.05em{\sc i\kern-.025em b}\kern-.08em
    T\kern-.1667em\lower.7ex\hbox{E}\kern-.125emX}}
\newtheorem{example}{Example}
\newtheorem{theorem}{Theorem}
\newtheorem{lemma}{Lemma}
\newtheorem{definition}{Definition}
\begin{document}

\title{Accelerating K-Core Computation in Temporal Graphs
}

\author{\IEEEauthorblockN{Zhuo Ma}
\IEEEauthorblockA{
\textit{University of New South Wales}\\
Sydney, Australia \\
zhuo.ma@student.unsw.edu.au}
\and
\IEEEauthorblockN{Dong Wen}
\IEEEauthorblockA{
\textit{University of New South Wales}\\
Sydney, Australia \\
dong.wen@unsw.edu.au}
\and
\IEEEauthorblockN{Hanchen Wang}
\IEEEauthorblockA{
\textit{AAII, University of Technology Sydney}\\
Sydney, Australia \\
hanchen.wang@uts.edu.au}
\\
\and
\IEEEauthorblockN{Wentao Li}
\IEEEauthorblockA{
\textit{University of Leicester}\\
Leicester, United Kingdom \\
wl226@leicester.ac.uk}
\and
\IEEEauthorblockN{Wenjie Zhang}
\IEEEauthorblockA{
\textit{University of New South Wales}\\
Sydney, Australia \\
wenjie.zhang@unsw.edu.au}
\and
\IEEEauthorblockN{Xuemin Lin}
\IEEEauthorblockA{
\textit{Shanghai Jiaotong University}\\
Shanghai, China \\
xuemin.lin@gmail.com}
}

\maketitle

\begin{abstract}
We address the problem of enumerating all temporal $k$-cores given a query time range and a temporal graph, which suffers from poor efficiency and scalability in the state-of-the-art solution.
Motivated by an existing concept called core times, we propose a novel algorithm to compute all temporal $k$-cores based on core times and prove that the algorithmic running time is bounded by the size of all resulting temporal $k$-cores, which is optimal in this scenario. Meanwhile, we show that the cost of computing core times is much lower, which demonstrates the close relevance between our overall running time and the result size.
%
%
We conduct extensive experiments to demonstrate the efficiency of our proposed method and the significant improvement over existing solutions.
\end{abstract}

\begin{IEEEkeywords}
component, formatting, style, styling, insert
\end{IEEEkeywords}

\section{Introduction}
\label{sec:intro}

The temporal graph, where each edge is associated with a timestamp, models various real-world interactions between entities, such as bank transactions and social network interactions over time. For example, in a money transaction network, vertices represent bank accounts and temporal edges represent transactions between accounts at specific times. 
The $k$-core model \cite{Seidman1983} is a fundamental concept to identify cohesive subgraphs, drawing significant research attention due to its wide range of applications, including community detection, network visualization, and system structure analysis \cite{Cui2014,Li2018,Zhang2010,Cheng2011,Khaouid2005,Montresor2011,Wen2015}. Given a graph, a $k$-core is defined as a maximal induced subgraph where each vertex has a degree of at least $k$. 

Identifying $k$-cores in temporal graphs is valuable for tasks such as detecting suspicious account networks during anti-money laundering efforts within specific time frames \cite{Chu2019, Lin2024, Starnini2021}.
To capture the $k$-core structure in temporal graphs, Yang et al. \cite{Yang2023} study the time-range $k$-core query, which enables detecting cohesive subgraphs in a flexible time range. Specifically, given a temporal graph, a query integer $k$, and a query time range, the problem aims to enumerate $k$-cores appearing in the snapshot over any time windows within the query time range. The snapshot over a time window is an unlabeled graph induced by all edges falling in the time window. They use the term temporal $k$-core for a time window to distinguish it from the $k$-core model in unlabeled graphs.

\begin{example}
Consider the temporal graph $G$ in \reffig{example}. Given a query time range $[1,4]$ and $k=2$, we have two temporal 2-cores as shown in Figure \ref{fig:2cores}. For the temporal $k$-core containing $\{v_1,v_2,v_4\}$, it is the $k$-core in the snapshots over the windows $[1,3], [2,3], [2,4]$. 
\end{example}
\vspace{-0.5em}

The time-range $k$-core query generalizes the historical $k$-core query \cite{Yu2021}, which focuses on a single time window, to address the need for analyzing temporal graphs across a continuous range of time windows. While certain applications may only require the $k$-core of a specific time window, many real-world problems demand a more exhaustive exploration of all possible sub-windows. For example, in social network analysis for misinformation detection \cite{Manurung2023, Oettershagen2023, Liu2024}, coordinated misinformation campaigns often unfold in bursts over varying time scales. Identifying the exact temporal pattern of such campaigns requires analyzing multiple overlapping time windows to detect tightly connected groups of users whose activity might not align with any pre-defined windows. By enumerating all temporal $k$-cores, platforms can uncover nuanced patterns, such as recurring bot activity or troll farms, that would be overlooked by static or single-window analysis. For instance, consider a scenario during a national election where misinformation is amplified by bots over short time windows. A single-window query might miss detecting these bursts, but an exhaustive enumeration ensures that all activity windows, regardless of their duration, are examined. Similarly, in disease outbreak monitoring and contact tracing \cite{Nguyen2021, Serafino2021}, transmission clusters may emerge and dissipate rapidly over short and irregular timeframes. Enumerating all $k$-cores across a range of time windows ensures that even fleeting, high-risk clusters are identified, enabling proactive measures like targeted quarantines or containment strategies. For instance, during a sudden outbreak in a community, interactions between infected individuals may peak and decline over unpredictable durations. An exhaustive temporal $k$-core enumeration enables health authorities to reconstruct transmission chains more accurately, even when precise timeframes are initially unknown.

\begin{figure}[tbp]
\centering
\includegraphics[scale=0.31]{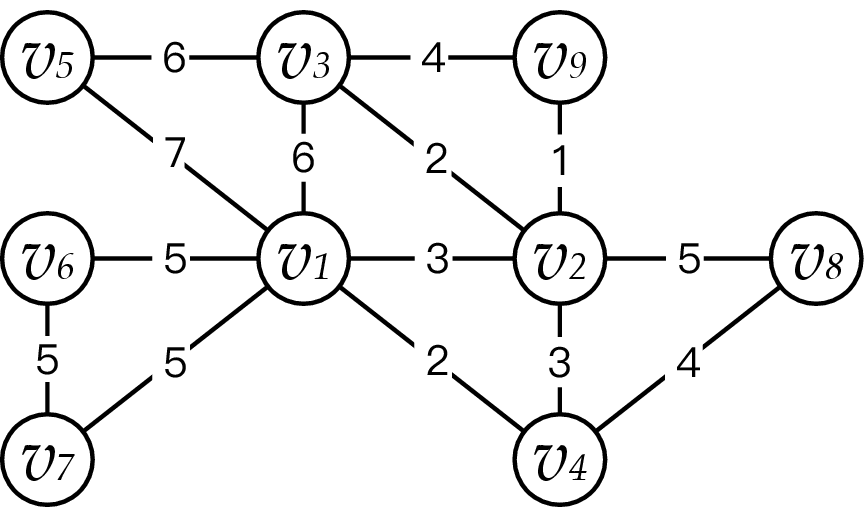}
\vspace{-0.5em}
\caption{A temporal graph $G$.}
\label{fig:example}
\vspace{0.5em}
\end{figure}

\begin{figure}[tbp]
\centering
\includegraphics[scale=0.31]{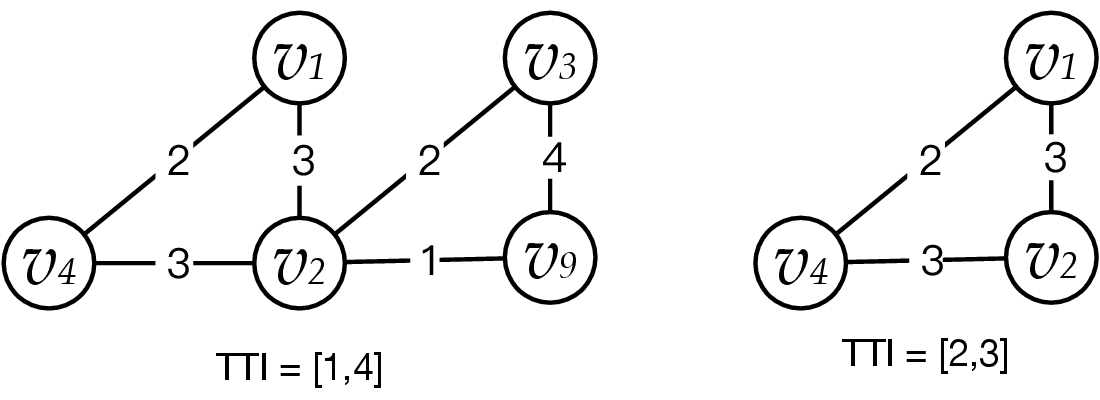}
\vspace{-0.5em}
\caption{The temporal $2$-cores of the graph in \reffig{example} given the query time range $[1,4]$.}
\label{fig:2cores}
\vspace{0.5em}
\end{figure}

\stitle{The State of the Art.} To enumerate all temporal $k$-cores in a time range, the general idea of \cite{Yang2023} is to iterate all time windows within the time range. Note that the $k$-core of a simple unlabeled graph can be computed by a peeling algorithm that continuously removes vertices with fewer than $k$ neighbors. Motivated by this, they first process the widest window and iteratively narrow the window. Benefiting from this strategy, the temporal $k$-core of a window can be directly computed from the result of the previous window. Certain data structures and optimizations are also designed to speed up the algorithm. For instance, they skip a window if they already find the temporal $k$-core of the current window is the same as that of the previous window based on several properties.
Despite a spectrum of optimizations, there is no clear better time complexity than $O(\tmax^2 * B)$ shown in \cite{Yang2023}, where $\tmax$ is the number of timestamps in the query time range and $B$ is the average cost of computing a temporal $k$-core for one time window. The main bottleneck of their method lies in $O(\tmax^2)$ iterations to scan windows. Note that $\tmax$ is up to the number of edges in the query range if the time label of each edge is unique. 


\stitle{The framework.} In this paper, we propose a novel algorithm to significantly speed up the time-range $k$-core query. Our idea is inspired by a concept named core time \cite{Yu2021}, which represents the earliest end time $te$ for a vertex $u$ and a start time $ts$ such that $u$ is in the $k$-core of the snapshot of the time window $[ts,te]$. The core time is a key component of the PHC index proposed in \cite{Yu2021}, which efficiently computes $k$-cores across all $k$-values for a given temporal graph. In this work, we focus on a specific $k$ and use only the portion of the PHC index relevant to that $k$, which we refer to as the Vertex Core Time index ($\vctidx$) for clarity. To compute $\vctidx$, \cite{Yu2021} proposes an algorithm with time complexity of $O(|\vctidx|*\avgdegree)$ where $|\vctidx|$ is the size of the vertex core time index and $\avgdegree$ is the average degree in the query graph. We observe that the algorithm can be easily extended to compute minimal core windows of all edges as a byproduct with the same theoretical running time. A minimal core window of an edge $e$ is a minimal time window $[ts,te]$ such that $e$ is in the $k$-core of the snapshot over $[ts,te]$. Intuitively, the minimal core windows of edges compress the relationship between the edge and the $k$-cores of all possible time windows. Therefore, we expect to have an efficient algorithm to directly compute all temporal $k$-cores based on the minimal core windows, and the method immediately avoids computing $k$-cores of individual time windows by analyzing the relationship with minimal core windows and resulting subgraphs.

While the historical $k$-core problem and its PHC index focus on efficiently computing $k$-cores for individual time windows, the time-range $k$-core problem introduces unique challenges. Specifically, the goal is to enumerate all $k$-cores across all overlapping time windows within a query time range. This task involves avoiding redundancy, as the same $k$-core may appear in multiple overlapping time windows. Enumerating all temporal $k$-cores thus requires careful handling of overlaps to ensure that each $k$-core is represented only once, significantly increasing the complexity compared to historical $k$-core computations.



\stitle{Efficient Enumeration.} Our main technical contribution is an efficient method to enumerate all temporal $k$-cores using the minimal core windows of edges. Starting from a given start time $ts$, we filter out minimal core windows that do not contribute to any $k$-core at $ts$ and sort the remaining windows by end times. We iteratively scan these windows, outputting the corresponding edges of all previous windows as a $k$-core when specific conditions are met, ensuring no duplicate results. To update the data structure from one start time to the next, we use a doubly linked list to maintain window order. By preprocessing all minimal core windows in linear time, we can update the order from $\dll'$ to $\dll$ in $O(|\dll \setminus \dll'|)$ time complexity, where $\dll \setminus \dll'$ is the difference between the set of windows in two orders. Consequently, the time complexity for enumerating $k$-cores for all start times is bounded by the result size, making it optimal. Overall, our solution has a time complexity of $O(|\vctidx|*\avgdegree+|\rset|)$, where $|\vctidx|*\avgdegree$ accounts for computing minimal core windows and $|\rset|$ represents the result size. We report the $|\vctidx|*\avgdegree$ and $|\rset|$ for several representative real datasets in \reffig{ct_vs_ressize}. The result size is much larger than $|\vctidx|*\avgdegree$ in all datasets, which demonstrates that the overall running time of our algorithm is mainly related to the result size.

\stitle{Contribution.} We summarize our main contributions below.

\begin{itemize}
\item\textit{A Novel Framework for Time-Range $K$-Core Queries.} We propose a new framework to precompute the minimal core windows of all edges instead of directly computing $k$-cores of each individual window in the state-of-the-art algorithm. \refsubsec{naive} proposes a basic implementation to enumerate all results based on minimal core windows.
\item\textit{Enumeration in Optimal Time Complexity.} Given all minimal core windows of edges, we propose an algorithm to enumerate all temporal $k$-cores in the time bounded by the result size. 
\item\textit{Extensive Performance Studies.} In our experiments, our final algorithm is two orders of magnitude faster than the state-of-the-art algorithm for the same problem on most datasets.
\end{itemize}

\vspace*{-0.5em}
\section{Preliminary}
\label{sec:pre}

\begin{figure*}[t!]
\centering
\includegraphics[scale=0.4]{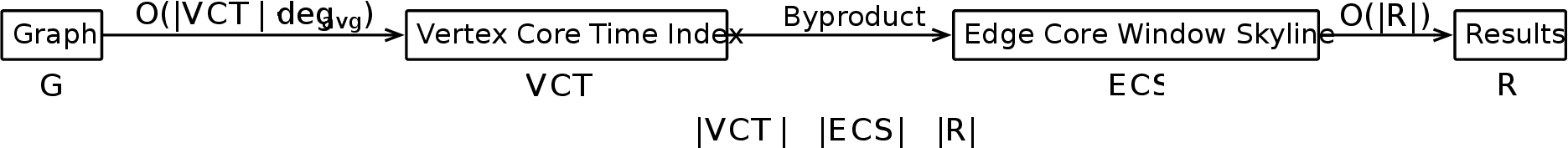}
\caption{Overview of our framework.}
\label{fig:overview}
\end{figure*}

We study an undirected temporal graph $G(V,E)$ where each edge $(u,v,t) \in E$ is associated with a timestamp $t$ that indicates the interaction time between vertices $u$ and $v$. We use $E_t$ to denote all edges with the associated time $t$. 
Without loss of generality, we denote the timestamps of edges as a continuous set of integers starting from 1. We assume each pair of vertices has at most one edge for simplicity, and our solution can be easily extended for the existence of multiple edges between two vertices. We use $deg(u)$ to denote the degree of a vertex $u$. The projected graph of $G$ over a time window $[ts,te]$, denoted by $G_{[ts,te]}$ is the temporal subgraph including all edges in $[ts,te]$.

\begin{definition}[$K$-Core \cite{Seidman1983}]
\label{def:k-core}
Given a simple graph $G$ and an integer $k$, the $k$-core of $G$ is the maximal induced subgraph of $G$ in which every vertex has at least $k$ neighbors.
\end{definition}

The concept of temporal $k$-core extends the $k$-core model for temporal graphs. We present it as follows.

\begin{definition}[Temporal $K$-Core \cite{Yang2023}]
\label{def:temporal-core}
Given a temporal graph $G$ and a window $[ts,te]$, a temporal $k$-core of $[ts,te]$ is a maximal subgraph $C$ of $G_{[ts,te]}$ where every vertex has at least $k$ neighbors.
\end{definition}


To study $k$-cores in a query time range of a temporal graph, we present the research problem studied in \cite{Yang2023} as follows.

\stitle{Problem Statement.} Given a temporal graph $G$, integer $k$, and a time range $[Ts,Te]$, we aim to compute temporal $k$-cores of all time windows $[ts,te] \subseteq [Ts,Te]$.



The results for the query window $[1,4]$ and the query integer $k=2$ on the graph $G$ of \reffig{example} is shown in \reffig{2cores}.
Given a query time range $[Ts,Te]$, we use $\nbr(u)$ to denote the neighbors of $u$ in $G_{[Ts,Te]}$ for ease of presentation when the context is clear. 
We use $\tmax$ to denote the number of distinct time labels in the query range, i.e., $\tmax = Te-Ts+1$. We use $n$ and $m$ to denote the number of vertices and the number of edges in the projected graph over the query range $[Ts,Te]$.
Two temporal $k$-cores are considered the same in \cite{Yang2023} if they have the same set of edges. The same temporal $k$-core may exist in multiple time windows, and any solution for the problem should avoid repeated outputs. 
We omit proofs for some lemmas and theorems in the paper if they are straightforward.

\section{Existing Studies}

\begin{algorithm}[t]
\caption{OTCD}
\label{alg:otcd}
\KwIn{$G$, $k$, $[Ts, Te]$}
\KwOut{all distinct temporal $k$-cores in $[Ts,Te]$}
\For{$ts \gets Ts$ to $Te$}{
    $te \gets Te$\;
    \If {$ts = Ts$} {
        obtain $C_ {[ts,te]}$ by truncating $G_{[Ts,Te]}$ and performing core decomposition on it\;
    }
    \Else {
        obtain $C_{[ts,te]}$ by truncating $C_{[ts-1,te]}$ and performing core decomposition on it\;
    }
    output $C_{[ts,te]}$ if it is not marked as pruned\;
    \For{$te \gets Te - 1$ to $ts$} {
        \If{$[ts, te]$ is not marked as pruned} {
            obtain $C_{[ts,te]}$ by truncating $C_{[ts,te+1]}$ and performing core decomposition on it\;
            compute time windows that need to be pruned based on TTI of the $C_{[ts,te]}$\;
            output $C_{[ts,te]}$\;
        }
    }
}
\end{algorithm}

\subsection{The State of the Art}

To enumerate temporal $k$-cores, \cite{Yang2023} proposed an algorithm called Optimized Temporal Core Decomposition (OTCD). The algorithm computes temporal $k$-cores decrementally by considering the projected graph from wide time windows to narrow time windows. \refalg{otcd} presents the algorithm. Given a query time range $[Ts, Te]$, OTCD first removes all edges not in $[Ts, Te]$ and computes the $k$-core in the truncated graph by iteratively removing all vertices that do not have at least $k$ neighbors. Next, OTCD enumerates each window $[ts, te] \subseteq [Ts, Te]$ in a specific order to compute the temporal $k$-core for the corresponding time windows decrementally from previously computed temporal $k$-cores. Specifically, OTCD initializes with $ts = Ts$ and $te = Te$. Then, it anchors the start time $ts$ and decreases the end time $te$ from $Te$ to $ts$. Once $te$ reaches $ts$, the algorithm increments $ts$ by one and repeats the process until $ts$ reaches $Te$. OTCD optimizes the algorithm by pruning some time windows that do not have any unique temporal $k$-cores. To this end, they define the following concept called Tightest Time Interval.

\begin{definition}[Tightest Time Interval]
\label{def:tti}
The tightest time interval (TTI) of a temporal $k$-core $C$, denoted by $\gwindow(C)$, is the minimal time window containing all edges in $C$.
\end{definition}

Based on \refdef{tti}, OTCD applies three rules: Pruning-on-the-Right (PoR), Pruning-on-the-Underside (PoU), and Pruning-on-the-Left (PoL) to avoid computing temporal $k$-cores of unnecessary time windows. For a start time $ts$ and an end time $te$, a temporal $k$-core is derived with the TTI $[ts',te']$. PoR prunes all the time windows starting from $ts$ ending from $te'$ to $te$. In the case of $ts' > ts$, PoU prunes the all time windows starting not earlier than $ts$ and ending later than $ts'$. 
%
If we have both $ts' > ts$ and $te' < te$, in addition to PoR and PoU, PoL prunes time windows starting after $ts'$ ending in the range $[te'+1,te]$.
Certain data structure is designed in \cite{Yang2023} to implement the pruning rules efficiently.

%
%

\stitle{Challenges.} 
Despite practical optimizations to prune certain time windows, the OTCD algorithm essentially checks every windows in the query range and is time-consuming. The time complexity of OTCD is $O(|\tmax| * (m * log(n) + m))$, where $O(m * log(n) + m)$ is the running time to compute temporal $k$-cores for ranges from one start time and all possible end times. The time complexity can also be represented as $O(|\tmax|^2 * B)$, where $B$ represents the average cost of computing the temporal $k$-core for one time window.

\subsection{Other Related Works}
\label{subsec:other}

Various $k$-core-related problems have been studied in temporal graphs, incorporating different temporal objectives and constraints beyond cohesiveness. Historical $k$-cores focus on snapshots at specific times \cite{Yu2021}, while maximal span-cores require edges to appear continuously throughout a time range \cite{Galimberti2018}. The $(\pi, \tau)$-persistent $k$-core maintains a $k$-core across $\tau$-length sub-windows if its persistence exceeds $\tau$ \cite{Li2018}. Continual cohesive subgraph search finds subgraphs containing a queried vertex under structural and temporal constraints \cite{Li2021}. Dense subgraphs in weighted temporal networks explore fixed vertex sets with varying edge weights \cite{Ma2020}. The $(k, h)$-core ensures each vertex has at least $k$ neighbors with $h$ interactions \cite{Wu2015}, while density bursting subgraphs (DBS) identify subgraphs with the fastest-growing density over time \cite{Chu2019}. Periodic community detection reveals recurring interaction patterns such as periodic $k$-cores \cite{Qin2019, Qin2022}. Finally, quasi-$(k, h)$-cores provide detailed measures for temporal graphs and focus on efficient maintenance \cite{Bai2020}. These models enhance the understanding of cohesive substructures in dynamic networks. 
\section{Solution Overview}
\label{sec:overview}


\subsection{The Framework}


To improve the efficiency of temporal $k$-core enumeration, we present an overview of our solution in \reffig{overview}. Our idea is inspired by a study on computing vertex core times \cite{Yu2021}, which is formally defined as follows.

\begin{definition}[Vertex Core Time]
\label{def:vct}
Given a temporal graph $G$, an integer $k$, a start time $ts$ and a vertex $u$, the core time of $u$, denoted by $\ct_{ts}(u)_k$, is the earliest end time $te$ such that $u$ is in the temporal $k$-core of $G_{[ts,te]}$.
\end{definition}

\begin{example}
Refer to the temporal graph in \reffig{example}. For $k = 2$, we compute the core time of vertices at the start time $ts = 1$ by considering time windows starting at $[ts,te] = [1,1]$ and increasing the end time $te$ from 1 to the largest timestamp 7. As we expand the time window to $[1,3]$, $v_1$ joins the 2-core. Therefore, its core time for $ts = 1$ is 3, i.e. $\ct_1(v_1)_2 = 3$. Similarly, for $ts = 3$, $\ct_3(v_1)_2 = 5$.
\end{example}

We omit the subscript $k$ in \refdef{vct} when it is clear from the context. By replacing vertex with edge, we also have the concept of edge core time. The existing work \cite{Yu2021} computes the core times of all vertices for all start times.
Given that the core time of a vertex for a series of continuous start times may be the same, they only output distinct core times with the corresponding earliest start times. For simplicity, we call this structure Vertex Core Time Index, denoted by $\vctidx$.
%

\begin{table}[tbp]
    \centering
    \caption{The vertex core time index of the temporal graph $G$ \reffig{example} for $k=2$.}
    \label{tab:vct}
  \begin{tabular}{ll}
    \toprule
    $v_1$: $[1,3], [3,5], [6,7], [7,\infty]$ & $v_6$: $[1,5], [6,\infty]$ \\
    $v_2$: $[1,3], [3,5], [4,\infty]$ & $v_7$: $[1,5], [6,\infty]$ \\
    $v_3$: $[1,4], [2,6], [3,7], [4,\infty]$ & $v_8$: $[1,5], [4,\infty]$ \\
    $v_4$: $[1,3], [3,5], [4,\infty]$ & $v_9$: $[1,4], [2,\infty]$ \\
    $v_5$: $[1,7], [7,\infty]$ & \\
  \bottomrule
  \vspace{1pt}
\end{tabular}
\end{table}


\begin{example}
\reftab{vct} presents the vertex core time index of the temporal graph $G$ for $k=2$ in \reffig{example}. For the vertex $v_1$, $[1,3]$ indicates that the core time of $v_1$ is $3$, and the vertex is in the $2$-core of all windows staring from $1$ ending by a time not earlier than $3$. The next label of $[1,3]$ is $[3,5]$. This implies the core time of $v_1$ for $ts = 2$ is still $3$, and its core time for $ts = 3$ is 5.
\end{example}

The vertex core time index can be computed in $O(|\vctidx|*\avgdegree)$ time complexity \cite{Yu2021}, where $|\vctidx|$ is the size of the vertex core time index and $\avgdegree$ is the average vertex degree in the graph. We will discuss the basic idea of computing $\vctidx$ in \refsubsec{vertexcoretime}.
The vertex core time index essentially compresses vertices in $k$-cores of all possible windows to some extent. This motivates us to develop algorithms to transfer it into all temporal $k$-cores. 
Given that each temporal $k$-core is distinguished by their edges \cite{Yang2023}, our idea is to first compute a structure to compress all possible windows for each edge $e$ such that $e$ is in the $k$-core of the window. Then, we utilize the windows of each edge and assemble them to produce the final temporal $k$-cores.
To this end, we first derive the \textit{edge core window skyline} of each edge, denoted as $\edgeskyline$, as a by-product of computing vertex core times without increasing the time complexity. The core window skyline of each edge is the set of all minimal core windows defined as follows.



\begin{definition}[Minimal Core Window]
Given a temporal graph $G$, an integer $k$, and an edge $e$, a time window $[ts,te]$ is a minimal core window of $e$ if (1) $e$ is in the $k$-core of $G_{[ts,te]}$; and (2) $e$ is not in the $k$-core of $G_{[ts',te']}$ for any $[ts',te'] \subset [ts,te]$.
\end{definition}

\begin{table}[tbp]
    \centering
    \caption{The minimal core windows (edge core window skyline) of all edges in the temporal graph $G$ of \reffig{example} for $k = 2$}
    \label{tab:ecs}
  \begin{tabular}{ll}
    \toprule
    $(v_2,v_9,1)$: $[ 1,4 ]$ & $(v_1,v_6,5)$: $[ 5,5 ]$ \\
    $(v_1,v_4,2)$: $[ 2,3 ]$ & $(v_1,v_7,5)$: $[ 5,5 ]$ \\
    $(v_2,v_3,2)$: $[ 1,4 ], [ 2,6 ]$ & $(v_2,v_8,5)$: $[ 3,5 ]$ \\
    $(v_1,v_2,3)$: $[ 2,3 ], [ 3,5 ]$ & $(v_6,v_7,5)$: $[ 5,5 ]$ \\
    $(v_2,v_4,3)$: $[ 2,3 ], [ 3,5 ]$ & $(v_1,v_3,6)$: $[ 2,6 ], [ 6,7 ]$\\
    $(v_3,v_9,4)$: $[ 1,4 ]$ & $(v_3,v_5,6)$: $[ 6,7 ]$\\
    $(v_4,v_8,4)$: $[ 3,5 ]$ & $(v_1,v_5,7)$: $[ 6,7 ]$\\
  \bottomrule
\end{tabular}
\end{table}


\begin{figure}[t!]
\centering
\includegraphics[scale=0.35]{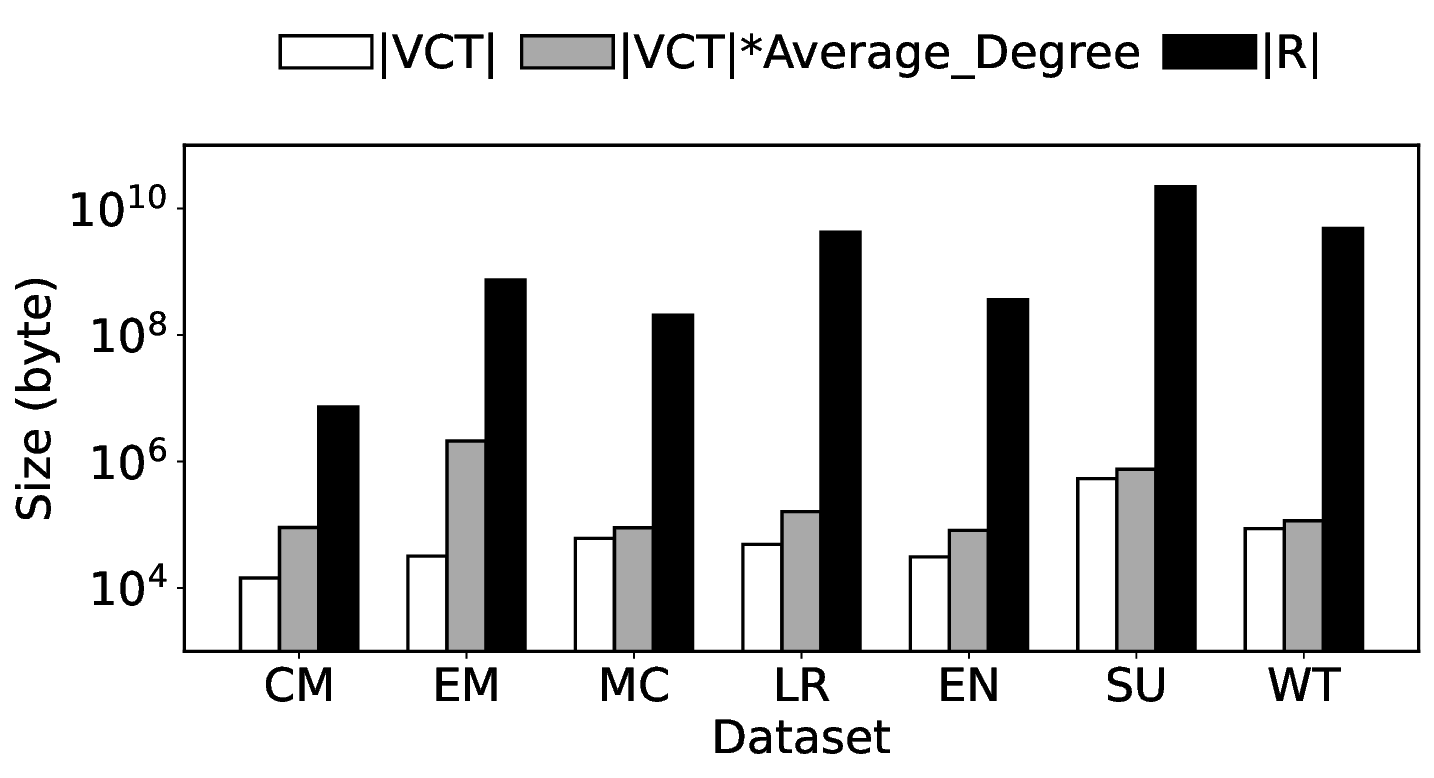}
\caption{$|\vctidx|$, $|\vctidx| * \avgdegree$, and $|\rset|$ for seven representative datasets given default parameters ($k$ = 30\% $k_{max}$, $t$ = 10\% $t_{max}$).}
\label{fig:ct_vs_ressize}
\end{figure}

\begin{example}
\reftab{ecs} presents the $\edgeskyline$ for all edges in the graph of \reffig{example} for $k = 2$. The edge $(v_2,v_9)$ has one minimal core window $[1,4]$, which indicates that $(v_2,v_9)$ is contained in a $2$-core in the time window $[1,4]$. For any sub-window of $[1,4]$, $(v_2,v_9)$ is not contained in any $2$-cores in that sub-window.
\end{example}

Based on the edge core window skyline of all edges, we develop an algorithm for temporal $k$-core enumeration and bound the time complexity by the result size, which is optimal in this phase. The result size means the sum of the number of edges in all resulting temporal $k$-cores. In this way, we will achieve an overall time complexity of $O(|\vctidx|*\avgdegree+|\rset|)$, where the first part $|\vctidx|*\avgdegree$ is the time complexity to compute the edge core window skyline and $|\rset|$ is the result size.

\begin{remark}
\reffig{ct_vs_ressize} presents $|\vctidx|$, $|\vctidx| * \avgdegree$, and $|\rset|$ for several representative datasets given the default query parameters in our experiments. All other datasets evaluated in our experiments present the similar trend. The sizes of the result sets are 2 to 4 orders of magnitude larger than $|\vctidx|*\avgdegree$ for all datasets. This indicates that the time complexity of our final algorithm is mainly related to the result size $|\rset|$ in practice.
\end{remark}



\subsection{Computing Edge Core Window Skyline}
\label{subsec:vertexcoretime}

\begin{algorithm}[t]
\caption{\algct}
\label{alg:ct_comp}
\KwIn{$G(V,E), Ts, Te, k$}
\KwOut{$\edgeskyline$}

compute the vertex core time index by \cite{Yu2021}\;


\tcc{initialize edge core times}
\ForEach{edge $(u,v,t)$ in $G_{[Ts,Te]}$}{
    $\ct(u,v) \gets \max(\ct_{Ts}(u),\ct_{Ts}(v),t)$\;
    $\edgeskyline(u,v) \gets \emptyset$\;
}

\ForEach{$T_s \leq t < T_e$}{
    \ForEach{vertex $u$ w.r.t. $\ct_{t+1}(u) \neq \ct_{t}(u)$}{
        \ForEach{$\pair{v,t'} \in N(u)_{[t+1, T_e]}$} {
            \tcc{update edge core times}
            $new\_ct \gets \max(\ct_{t+1}(u),\ct_{t+1}(v),t')$\;
            \If{$new\_ct > \ct(u,v)$}{
                $\edgeskyline(u,v) \gets \edgeskyline(u,v) \cup \{[t,\ct(u,v)]\}$\;
                $\ct(u,v) \gets new\_ct$\;
            }
            
        }
    }



            

}
\Return{$\edgeskyline$}
\end{algorithm}


We first review the existing technique \cite{Yu2021} to compute the vertex core time index. Given the query window $[Ts,Te]$, they first compute the core time of each vertex for the start time $Ts$. Then, they update the core time by increasing the start time from $Ts$ to $Te$ and record all distinct core time values. To this end, certain values for each vertex are maintained when increasing the start time. Monitoring these values enables identifying whether the core time of any vertex changes. If so, the core time is recomputed for the current start time. They scan neighbors of a node to recompute the vertex core time for any start time, and no extra cost is required to monitor if the core time of any vertex needs change. As a result, they achieve $O(|\vctidx|*\avgdegree)$ time complexity.

\stitle{Deriving Edge Core Window Skyline.} To compute the edge core window skyline, we first demonstrate a correspondence between the vertex core time and the edge core time. In this way, we update the edge core time when any vertex core time changes. 

\begin{lemma}
\label{lem:vtoe}
Given a start time $ts$ and an integer $k$, the core time of an edge $(u,v,t)$ is $\ct_{ts}(u,v,t) = \max(\ct_{ts}(u),\ct_{ts}(v),t)$.
\end{lemma}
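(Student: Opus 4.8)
The plan is to prove the identity by establishing two inequalities, namely $\ct_{ts}(u,v,t) \geq \max(\ct_{ts}(u),\ct_{ts}(v),t)$ and $\ct_{ts}(u,v,t) \leq \max(\ct_{ts}(u),\ct_{ts}(v),t)$, where the edge core time $\ct_{ts}(u,v,t)$ is the earliest end time $te$ such that the edge $(u,v,t)$ lies in the temporal $k$-core of $G_{[ts,te]}$. Throughout I would rely on the monotonicity of temporal $k$-cores: if a vertex or edge belongs to the $k$-core of $G_{[ts,te]}$, then it belongs to the $k$-core of $G_{[ts,te']}$ for every $te' \geq te$, since enlarging the end time only adds edges and hence can only increase degrees during the peeling process. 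This monotonicity makes ``earliest end time'' well defined and lets me reason about the single window $[ts, \max(\ct_{ts}(u),\ct_{ts}(v),t)]$.

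First I would argue the lower bound. Suppose the edge $(u,v,t)$ is in the $k$-core of $G_{[ts,te]}$. Then both endpoints $u$ and $v$ must appear in that $k$-core, so by \refdef{vct} we must have $te \geq \ct_{ts}(u)$ and $te \geq \ct_{ts}(v)$. Moreover, for the edge to be present in $G_{[ts,te]}$ at all we need its timestamp to satisfy $t \leq te$, giving $te \geq t$. Combining these three constraints yields $te \geq \max(\ct_{ts}(u),\ct_{ts}(v),t)$, and taking the earliest such $te$ gives $\ct_{ts}(u,v,t) \geq \max(\ct_{ts}(u),\ct_{ts}(v),t)$.

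For the upper bound I would set $te^{\ast} = \max(\ct_{ts}(u),\ct_{ts}(v),t)$ and show that $(u,v,t)$ actually lies in the $k$-core of $G_{[ts,te^{\ast}]}$. By the choice of $te^{\ast}$, both $u$ and $v$ are individually in the temporal $k$-core of $G_{[ts,te^{\ast}]}$ (using monotonicity to push their core times up to $te^{\ast}$), and since $t \leq te^{\ast}$ the edge $(u,v,t)$ is present in $G_{[ts,te^{\ast}]}$. The key point I expect to need is that the $k$-core is a unique maximal subgraph, so if $u$ and $v$ are both in the $k$-core, the edge between them is automatically included — the induced $k$-core subgraph contains every edge of $G_{[ts,te^{\ast}]}$ joining two of its vertices. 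Hence $(u,v,t)$ is in the $k$-core of $G_{[ts,te^{\ast}]}$, so $\ct_{ts}(u,v,t) \leq te^{\ast} = \max(\ct_{ts}(u),\ct_{ts}(v),t)$.

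The main obstacle is the upper bound, specifically justifying that both endpoints being in the $k$-core guarantees the edge is in the $k$-core. This is where I would lean on the fact that the (temporal) $k$-core is an \emph{induced} maximal subgraph in which every vertex has at least $k$ neighbors (\refdef{k-core} and \refdef{temporal-core}): once a vertex survives peeling it retains all its surviving incident edges, so an edge whose two endpoints both survive is necessarily retained. I should be careful that the core times of $u$ and $v$ are defined for the \emph{same} start time $ts$ so that the single window $[ts,te^{\ast}]$ simultaneously contains both of them in its $k$-core; the uniqueness of the $k$-core then makes the two endpoints live in the \emph{same} maximal subgraph rather than in two different cohesive pieces. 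With that observation the edge inclusion is immediate and the two inequalities together give the claimed equality.
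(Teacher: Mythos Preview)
Your proof is correct and is precisely the natural argument one would give; the paper itself omits the proof of this lemma as straightforward (cf.\ the remark at the end of \refsec{pre}), so there is nothing to compare against beyond noting that your two-inequality argument via monotonicity and the induced/maximal nature of the $k$-core is exactly the expected justification.
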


Based on \reflem{vtoe}, we initialize the edge core time for the start time $Ts$ in lines 3. $\edgeskyline(u,v)$ in line 4 is used to collect all minimal core windows for $(u,v)$.
Then we update the core time for each edge in line 7 if the core time of any terminal of the edge updates. We present the following lemma to derive the edge minimal core window when the edge core time updates.

\begin{lemma}
\label{lem:ct2minimal}
Given a start time $ts$ and an integer $k$, assume the core time of an edge $e = (u,v,t)$ for $ts$ is different from that for $ts+1$. We have $[ts,\ct_{ts}(e)]$ is a minimal core window of $e$ for $k$.
\end{lemma}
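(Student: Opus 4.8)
The plan is to establish the two defining conditions of a minimal core window directly from the hypothesis that $\ct_{ts}(e) \neq \ct_{ts+1}(e)$, using \reflem{vtoe} to translate statements about the edge core time into statements about the vertex core times of its endpoints. First I would record that the edge core time is monotone in the start time: as $ts$ increases, the window shrinks, so membership in a $k$-core can only become harder, and therefore $\ct_{ts}(e) \leq \ct_{ts+1}(e)$ for every start time. Combined with the hypothesis, this gives the strict inequality $\ct_{ts}(e) < \ct_{ts+1}(e)$, which will be the workhorse of the argument.

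For condition (1), I would simply invoke \refdef{vct} together with \reflem{vtoe}: by definition of the edge core time, $\ct_{ts}(e)$ is the earliest end time $te$ such that $e$ lies in the $k$-core of $G_{[ts,te]}$, so taking $te = \ct_{ts}(e)$ shows that $e$ is in the $k$-core of $G_{[ts,\ct_{ts}(e)]}$. The harder direction is condition (2), the minimality, where I must rule out membership of $e$ in the $k$-core of every strict sub-window $[ts',te'] \subset [ts,\ct_{ts}(e)]$. The strategy is to split a strict sub-window into two cases according to whether the shrinkage is on the right end or the left end. If $te' < \ct_{ts}(e)$ and $ts' \geq ts$, then since shrinking the start time only increases the core time, $\ct_{ts'}(e) \geq \ct_{ts}(e) > te'$, so by the earliest-end-time meaning of the core time, $e$ cannot be in the $k$-core of $G_{[ts',te']}$. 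If instead $ts' > ts$ (so $ts' \geq ts+1$) and $te' \leq \ct_{ts}(e)$, then $\ct_{ts'}(e) \geq \ct_{ts+1}(e) > \ct_{ts}(e) \geq te'$, and again $e$ is not in the $k$-core of $G_{[ts',te']}$. Since any strict sub-window falls into at least one of these two cases, no strict sub-window contains $e$ in its $k$-core.

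The main obstacle I anticipate is making the monotonicity claim $\ct_{ts}(e) \leq \ct_{ts'}(e)$ for $ts \leq ts'$ fully rigorous: it rests on the fact that $G_{[ts',te]}$ is an edge-subgraph of $G_{[ts,te]}$ whenever $ts \leq ts'$, so a vertex (hence an edge) in the $k$-core of the smaller window is also in the $k$-core of the larger one. This should follow from the standard monotonicity of $k$-cores under adding edges, but I would state it as a small preliminary observation to keep the case analysis clean. A secondary subtlety is handling the boundary where a sub-window shrinks on both ends at once; this is already covered because such a window satisfies both $te' < \ct_{ts}(e)$ (falling into the first case) whenever $te' < \ct_{ts}(e)$, and otherwise $te' = \ct_{ts}(e)$ forces $ts' > ts$ and lands in the second case. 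With monotonicity in hand, the remaining steps are routine applications of \reflem{vtoe} and \refdef{vct}.
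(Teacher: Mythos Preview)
Your proof is correct. The paper's own argument is a terse contrapositive: if $[ts,\ct_{ts}(e)]$ were not minimal, then since no window $[ts,te']$ with $te'<\ct_{ts}(e)$ can contain $e$ in its $k$-core (by definition of $\ct_{ts}$), the violating strict sub-window must begin at $ts+1$ or later, which yields $\ct_{ts+1}(e)\le \ct_{ts}(e)$ and hence $\ct_{ts+1}(e)=\ct_{ts}(e)$, contradicting the hypothesis. Your direct proof relies on exactly the same two facts---monotonicity of the edge core time in the start time, and the ``earliest end time'' meaning of $\ct_{ts}(e)$---but packages them as an explicit two-case analysis on whether the sub-window shrinks on the right or on the left. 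The paper's route is shorter; yours is more explicit about the monotonicity step and the case coverage, which makes the argument easier to verify.
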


\begin{proof}
We prove this lemma by contradiction. Suppose we have $\ct_{ts+1}(e) = \ct_{ts}(e)$ and $[ts,\ct_{ts}(e)]$ is not a minimal core window of $e$, then $e$ is within the k-core of the time window $[ts+1, \ct_{ts+1}(e)] \subset [t_s, \ct_{ts+1}(e)]$, which contradicts the definition of minimal core window.
\end{proof}

Based on \reflem{ct2minimal}, we derive a minimal core window of the edge in line 10 when its core time updates.

\begin{example}
Refer to the graph in \reffig{example}. We compute the $\edgeskyline$ for $k = 2$ by \refalg{ct_comp}. Initially at $ts = 1$, $\ct_1(v_2) = 3$ and $\ct_1(v_3) = 4$. Therefore by line 3, we have $\ct_1(v_2,v_3) = \max(3, 4, 2) = 4$. As we move onto $ts = 2$, the core time of $v_3$ has changed to $\ct_2(v_3) = 6$, which is larger than the existing core time of the edge $\ct_1(v_2,v_3) = 4$. Therefore, we have $\ct_2(v_2, v_3) = 6$. Since the core time of $(v_2, v_3)$ has changed at $ts = 2$, we add $[1, 4]$ into $\edgeskyline(v_2, v_3)$ by line 10. Then, we update the core time of $(v_2, v_3)$ to 6 by line 11.
\end{example}

Processing the edge core time (lines 2--4 and lines 9--11) does not incur additional time complexity. We achieve the same time complexity of computing vertex core time in \cite{Yu2021}, which is formally summarized as follows.

\begin{theorem}
\label{thm:cttime}
The time complexity of \refalg{ct_comp} is $O(|\vctidx| \cdot \avgdegree)$.
\end{theorem}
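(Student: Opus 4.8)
The plan is to show that the edge-core-time processing interleaved into the vertex-core-time algorithm of \cite{Yu2021} adds only a constant factor of overhead per unit of work already performed, so the asymptotic bound $O(|\vctidx|\cdot\avgdegree)$ is preserved. First I would recall the accounting for the baseline computation: by \refthm{cttime}'s hypothesis (i.e., the result quoted from \cite{Yu2021}), computing the vertex core time index costs $O(|\vctidx|\cdot\avgdegree)$, and this cost is itself charged to the work in the nested loops of lines~5--8, namely the outer scan over start times $t$, the middle scan over only those vertices $u$ whose core time actually changes (line~6), and the inner scan over neighbors $\pair{v,t'}\in N(u)_{[t+1,T_e]}$ (line~7). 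The key observation is that the edge-core-time updates are \emph{not} a new loop but additional constant-time statements placed inside these existing loops.

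The main argument proceeds step by step. First, I would handle the initialization (lines~2--4): this is a single pass over all edges of $G_{[Ts,Te]}$, costing $O(m)$, and since each edge contributes at least one entry and $m \le |\vctidx|\cdot\avgdegree$ by the structure of the index, this is absorbed into the stated bound. Second, I would examine the per-iteration work added inside the triple loop: lines~9--11 consist of computing a maximum of three values (line~9), a single comparison (line~10), and, conditionally, one set-insertion plus one assignment (lines~10--11). Each of these is $O(1)$. Since these statements execute exactly once per iteration of the inner loop over neighbors, the total added cost is proportional to the number of such iterations, which is precisely the quantity already bounded by $O(|\vctidx|\cdot\avgdegree)$ in the baseline analysis. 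Hence no iteration count changes and only a constant factor is introduced.

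The step I expect to require the most care is arguing that the total number of set-insertions into $\edgeskyline(u,v)$ across all start times does not secretly exceed the iteration budget. Each insertion (line~10) corresponds to a genuine change in an edge's core time, and by \reflem{ct2minimal} this produces a distinct minimal core window; however, the cost of each insertion is already charged to the inner-loop iteration that triggers it, so even in the worst case the insertions are a subset of the neighbor-scan iterations and cannot dominate. I would note explicitly that we rely on $\edgeskyline(u,v)$ supporting $O(1)$ amortized appends (e.g., appending to a list), so that line~10 is genuinely constant time. Combining the $O(m)$ initialization with the constant-overhead-per-iteration argument for the main loop yields the total $O(|\vctidx|\cdot\avgdegree)$, matching the baseline and establishing the theorem.
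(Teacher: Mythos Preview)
Your proposal is correct and follows exactly the approach the paper takes: the paper's justification is the single sentence preceding the theorem, ``Processing the edge core time (lines 2--4 and lines 9--11) does not incur additional time complexity,'' and you have simply fleshed out that sentence into a careful accounting argument. Your treatment is more detailed than the paper's (which omits the proof as straightforward), but the underlying idea---that the added edge-side work is $O(1)$ per existing inner-loop iteration of the \cite{Yu2021} algorithm---is identical.
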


\section{Enumerating Temporal K-Cores}
\label{sec:solution}

\subsection{A Straightforward Method}
\label{subsec:naive}

We first discuss a basic solution to enumerate all temporal $k$-cores given the minimal core windows of all edges. Similar to OTCD \cite{Yang2023}, our algorithm enumerates all time windows within the query time range. Given each time window, the following lemma demonstrates a way to derive its temporal $k$-core based on minimal core windows of edges.

\begin{lemma}
\label{lem:allwindow}
Given a time window $[ts,te]$ and its temporal $k$-core $C$, an edge $e$ is in $C$ if and only if there is a minimal core window $[t1,t2]$ of $e$ contained in $[ts,te]$, i.e., $[t1,t2] \subseteq [ts,te]$.
\end{lemma}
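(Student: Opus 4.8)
The plan is to prove both directions of the biconditional, exploiting the characterization of the temporal $k$-core via minimal core windows together with the definition of minimal core window itself.

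For the ``if'' direction, suppose $e$ has a minimal core window $[t_1,t_2]$ with $[t_1,t_2] \subseteq [ts,te]$. By the definition of minimal core window, $e$ lies in the $k$-core of $G_{[t_1,t_2]}$. Since $[t_1,t_2] \subseteq [ts,te]$, every edge of $G_{[t_1,t_2]}$ is also an edge of $G_{[ts,te]}$, so the $k$-core of $G_{[t_1,t_2]}$ is a subgraph in which every vertex has at least $k$ neighbors within $G_{[ts,te]}$ as well. Because the temporal $k$-core $C$ of $[ts,te]$ is the \emph{maximal} such subgraph, it must contain this subgraph, and in particular $e \in C$. The key fact I would invoke here is that adding edges (enlarging the window) can only preserve or increase vertex degrees, so a subgraph witnessing membership in a smaller window witnesses it in the larger one; hence the $k$-core of the sub-window is absorbed into the maximal $k$-core of the larger window.

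For the ``only if'' direction, suppose $e \in C$, the temporal $k$-core of $[ts,te]$. Then $e$ is in the $k$-core of $G_{[ts,te]}$, so by the definition of the core time of the edge, $\ct_{ts}(e) \le te$, i.e., the edge already qualifies within the window $[ts, \ct_{ts}(e)] \subseteq [ts,te]$. I would then argue that among all windows contained in $[ts,te]$ in which $e$ survives in the $k$-core, one can shrink to a \emph{minimal} one: starting from $[ts,\ct_{ts}(e)]$, repeatedly narrow either endpoint as long as $e$ remains in the $k$-core of the resulting window. This process terminates (the window is finite) and yields some $[t_1,t_2] \subseteq [ts,te]$ such that $e$ is in the $k$-core of $G_{[t_1,t_2]}$ but not of any proper sub-window, which is exactly a minimal core window of $e$ contained in $[ts,te]$.

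The main obstacle is the ``only if'' direction: I must show that from the mere fact that $e$ lies in the $k$-core of $[ts,te]$, an \emph{actual minimal} core window (as opposed to just some core window) sitting inside $[ts,te]$ can be extracted. The subtlety is that shrinking the window is not monotone in an obvious two-dimensional way — narrowing one endpoint might only be possible after fixing the other — so I would carefully frame the shrinking as producing \emph{some} minimal element under the subset partial order on windows in which $e$ survives, rather than a canonical one. Finiteness of the set of candidate windows guarantees such a minimal element exists, and by construction it satisfies both conditions of \refdef{vct}'s companion (the minimal core window definition). Once this existence is secured, the rest is routine.
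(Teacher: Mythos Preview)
Your proof is correct. The paper itself omits the proof of this lemma, explicitly stating that proofs of straightforward lemmas are left out, so there is no proof in the paper to compare against; your argument supplies exactly the kind of routine verification the authors had in mind.

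One minor remark: in the ``only if'' direction, the detour through the edge core time $\ct_{ts}(e)$ is unnecessary. You can go directly from ``$e$ is in the $k$-core of $G_{[ts,te]}$'' to ``the finite, nonempty set of sub-windows of $[ts,te]$ in which $e$ survives has a minimal element under inclusion,'' and observe (as you do in your final paragraph) that any proper sub-window of such a minimal element is automatically a sub-window of $[ts,te]$, so minimality within $[ts,te]$ coincides with the global minimality required by the definition of minimal core window. Your shrinking argument reaches the same conclusion and is perfectly valid; the finiteness observation just lets you skip the operational description.
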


Based on \reflem{allwindow}, the union of all edges satisfying the condition is the temporal $k$-core of $[ts,te]$. 
Algorithm \ref{alg:baseline} presents the basic solution. For each start time $ts$ from $Te$ to $Ts$, we initialize a set of buckets (line 3). Then, we obtain the relevant window of all edges that satisfies the condition in \reflem{allwindow} and insert the corresponding edge into the designated bucket based on the end time of the window (lines 4-6). After the bucket is constructed, for each end time $te$, we insert all vertices in the bucket to the current edge set (line 10). A minor optimization is applied in line 9, where $B[te] = \emptyset$ means the temporal $k$-core is the same as that for $[ts,te-1]$.
Then, we check if the current temporal $k$-core is already in the result set in line 11. To this end, we maintain all computed temporal $k$-cores by a hash table in $\rset$. We add the temporal $k$-core into the result set in line 12. 
By utilizing the edge core window skyline, we are able to find all temporal $k$-cores for each time window.

\stitle{Drawbacks of \refalg{baseline}.} The improvement opportunities of \refalg{baseline} lie in two aspects. First, even with an optimization (line 9) applied to reduce the examination of unnecessary time windows, the algorithm still needs to scan $O(\tmax^2)$ time windows in the worst case. Second, the same temporal $k$-core may be computed when processing multiple time windows. This implies significant costs for unnecessary computations.


\begin{algorithm}[t!]
\caption{\algbaseenum}
\label{alg:baseline}
\SetAlgoVlined
\KwIn{a temporal graph $G$, an integer $k$, a time range $[Ts,Te]$, the $\edgeskyline$ of $G$ for $k$}
\KwOut{all distinct temporal $k$-cores in $[Ts,Te]$}
$\rset \gets \emptyset$\;

\ForEach{$Ts \le ts \le Te$} {
    \lForEach {$ts \le te \le Te$}{$B[te] \gets \emptyset$}
    \ForEach {edge $e \in \edgeskyline$} {
        $[t1,t2] \gets$ the first window in $\edgeskyline(e)$ such that $t1 \ge ts$\;

        $B[t2] \gets B[t2] \cup \{(e)\}$

    }
    $C \gets \emptyset$\;

    \ForEach{$ts \le te \le Te$} {
        \lIf {$B[te] = \emptyset$}{\kwcontinue}
        $C \gets C \cup B[t_e]$\;

        \lIf{$C \in \rset$}{\kwcontinue}
        $\rset \gets \rset \cup \{C\}$\;
    }
}
\Return{$\rset$}
\end{algorithm}

\subsection{Anchoring the Start Time}
\label{subsec:start}


To respond to the drawbacks of the basic solution, we expect to design an algorithm that (1) only scans time windows with an undiscovered temporal $k$-core and (2) visits each temporal $k$-core only once.
Given the edge core window skyline in this section, we first discuss the data structure and the corresponding algorithm to enumerate all temporal $k$-cores for a certain start time, which is a subproblem of temporal $k$-core enumeration. We will discuss how to update the structure from one start time to the next start time in \refsubsec{all}, which produces all final results.

Based on \refdef{tti}, there is a one-to-one correspondence between a temporal $k$-core and its Tightest Time Interval (TTI). In other words, if we find TTIs of all temporal $k$-cores, we can output the temporal $k$-core of each TTI as a result. Recall that the temporal $k$-core of a time window (TTI) $[ts,te]$ can be computed by the union of all edges with a minimal core window in $[ts,te]$ (\reflem{allwindow}). Therefore, we mainly discuss the relation between TTIs and minimal core windows below, which produces a method to derive TTIs based on minimal core windows.
For simplicity, we say the start time (resp. the end time) of a temporal $k$-core $C$ to represent the start time (resp. the end time) of the TTI of $C$. Note that the temporal $k$-core $C$ of a time window $[ts,te]$ does not guarantee $[ts,te]$ is a TTI of $C$ (the TTI may be a subwindow of $[ts,te]$).
We start by showing that not all possible time labels can be the start time of a temporal $k$-core.



\begin{lemma}
\label{lem:windowstart}
A temporal $k$-core starting from $ts$ exists if and only if there exists a minimal core window $[t1,t2]$ of an edge with $t1 = ts$.
\end{lemma}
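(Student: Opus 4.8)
The plan is to prove the two implications separately, leaning on the one-to-one correspondence between a temporal $k$-core and its TTI (\refdef{tti}) together with \reflem{allwindow}. The recurring tool I will set up first is a short monotonicity observation about $k$-cores in unlabeled graphs: if a subgraph $S$ has minimum degree at least $k$ within itself, then $S$ is contained in the $k$-core of every graph $H \supseteq S$, since the peeling process never removes a vertex of $S$ (by induction each such vertex always retains its $\ge k$ surviving neighbors). Applying this to projected graphs $G_{[s,t]}$ lets me transfer edge membership between nested windows, and this is the engine behind both directions.

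For the ``only if'' direction, suppose a temporal $k$-core $C$ with $\gwindow(C) = [ts,b]$ exists. Because $ts$ is the left endpoint of the minimal window containing all edges of $C$, some edge $e = (u,v,ts)$ of $C$ has timestamp exactly $ts$. Since $C \subseteq G_{[ts,b]}$ and every vertex of $C$ has degree at least $k$ in $C$, the monotonicity observation places $e$ in the $k$-core of $G_{[ts,b]}$. Applying \reflem{allwindow} to the window $[ts,b]$ then yields a minimal core window $[t1,t2]$ of $e$ with $[t1,t2] \subseteq [ts,b]$, so $t1 \ge ts$. On the other hand any window containing $e$ must include its timestamp, forcing $t1 \le ts$. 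Hence $t1 = ts$, which is the required minimal core window.

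For the ``if'' direction, which I expect to be the main obstacle, suppose $[ts,t2]$ is a minimal core window of an edge $e$. Then $e$ lies in the $k$-core of $G_{[ts,t2]}$; let $C$ be the temporal $k$-core containing $e$ and let $\gwindow(C) = [a,b]$. Since every edge of $C$ has timestamp in $[ts,t2]$ we immediately get $a \ge ts$, and the crux is to exclude $a > ts$. If $a > ts$ (which presupposes $t2 > ts$, as otherwise the window is a single instant and $a = ts$ is forced), then $C \subseteq G_{[a,b]} \subseteq G_{[ts+1,t2]}$, and because $C$ still has minimum degree at least $k$, the monotonicity observation places $C$, and in particular $e$, inside the $k$-core of $G_{[ts+1,t2]}$. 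But $[ts+1,t2] \subset [ts,t2]$, contradicting condition (2) of $[ts,t2]$ being a minimal core window of $e$. Therefore $a = ts$, so $C$ is a temporal $k$-core whose TTI starts at $ts$.

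The two points I would verify with care are: first, that the temporal $k$-core $C$ extracted from the $k$-core of $G_{[ts,t2]}$ genuinely retains minimum degree at least $k$ as a standalone subgraph, so that the monotonicity observation is applicable to the narrower window; and second, the boundary case $t2 = ts$, where the minimal core window collapses to a single timestamp and the TTI of the corresponding core trivially starts at $ts$. I would state the monotonicity observation once as a preliminary, after which both directions reduce to the endpoint bookkeeping sketched above.
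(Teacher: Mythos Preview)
Your proof is correct. For the ``if'' direction you and the paper argue in the same way: take the temporal $k$-core $C$ of $G_{[ts,t2]}$, and use that $e$ cannot lie in the $k$-core of any proper subwindow to force the TTI of $C$ to start at $ts$ (the paper states the slightly stronger ``$[t1,t2]$ is the TTI of $C$'', which is the same contradiction applied at both endpoints; your handling of the degenerate case $t2=ts$ is a nice touch the paper omits).

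For the ``only if'' direction you take a genuinely different route. The paper argues globally: if no minimal core window started at $ts$, it shifts to the earliest start $ts'>ts$ that does occur and invokes \reflem{allwindow} to conclude that the temporal $k$-cores of $[ts,te]$ and $[ts',te]$ have identical edge sets, contradicting that $[ts,te]$ is a TTI. You instead work locally and constructively: pick an edge $e\in C$ whose timestamp is exactly $ts$ (such an edge exists by definition of TTI), use your monotonicity observation to place $e$ in the $k$-core of $G_{[ts,b]}$, extract a minimal core window $[t1,t2]\subseteq[ts,b]$ of $e$ via \reflem{allwindow}, and then pin $t1=ts$ using the timestamp constraint $t1\le ts$. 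Your argument exhibits the desired minimal core window explicitly rather than deducing its existence by elimination; the paper's argument, on the other hand, stays entirely inside \reflem{allwindow} and does not need to isolate the standalone minimum-degree fact. Both are short and either would serve.
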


\begin{proof}
We first assume a temporal $k$-core exists with a TTI $[ts,te]$. There must exist a minimal core window starting from $ts$. Otherwise, we can increase the start time of the temporal $k$-core to the earliest value $ts'$ such that a minimal core window starts from. The edges in the temporal $k$-core of $[ts',te]$ is the same as that of $[ts,te]$ based on \reflem{allwindow}, which contradicts the TTI $[ts,te]$.
Next, given a minimal core window $[t1,t2]$ of an edge $e$, let $C$ be the temporal of $k$-core of the window $[t1,t2]$. It is clear to see that $[t1,t2]$ is the TTI of $C$ because $e$ is not in the temporal $k$-core of any subwindow of $[t1,t2]$.
\end{proof}



\reflem{windowstart} provides a necessary and sufficient condition for the start time of any temporal $k$-core. For simplicity, we call a time $ts$ a \textit{valid start time} if there exists a minimal core window $[t1,t2]$ for an edge with $t1=ts$. 
Next, given the valid start time $ts$, we first discuss several properties of the end time $te$ such that a temporal $k$-core exists with the TTI $[ts,te]$. Then, we discuss the structure to efficiently enumerate edges of temporal $k$-core for different valid end times.

\stitle{Valid End Times.} We propose two necessary conditions for the end time of temporal $k$-cores as follows.


\begin{lemma}
\label{lem:validend1}
A temporal $k$-core exists with a TTI $[ts,te]$ only if there exists a minimal core window $[t1,t2]$ for an edge $e$ such that (1) $ts \le t1, t2=te$; and (2) there does not exist a minimal core window $[t1',t2']$ for $e$ with $ts \le t1' < t1$.
\end{lemma}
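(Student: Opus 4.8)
Looking at Lemma 6 (\reflem{validend1}), I need to understand what it claims.

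The plan is to prove this necessary condition directly: assume a temporal $k$-core $C$ exists with TTI $[ts,te]$, and exhibit an edge $e$ together with a minimal core window $[t1,t2]$ satisfying (1) and (2). The natural candidate is the edge of $C$ carrying the latest timestamp, so the first step is to recall from \refdef{tti} that the TTI $[ts,te]$ is the minimal window containing every edge of $C$; hence $te$ is exactly the maximum timestamp over the edges of $C$, and there is an edge $e^{\ast}=(u,v,te)\in C$ whose own timestamp equals $te$.

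Next I would establish condition (1). Since $e^{\ast}\in C$ and $C$ is the temporal $k$-core of $[ts,te]$, \reflem{allwindow} gives a minimal core window $[t1,t2]$ of $e^{\ast}$ with $[t1,t2]\subseteq[ts,te]$; in particular $ts\le t1$ and $t2\le te$. On the other hand, any core window of $e^{\ast}$ must actually contain the edge $e^{\ast}$, so it must contain its timestamp $te$, forcing $t2\ge te$. Combining the two inequalities yields $t2=te$, which is condition (1).

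For condition (2) the key auxiliary fact is the co-monotonicity of the endpoints of the minimal core windows of a single edge: if $[a,b]$ and $[c,d]$ are distinct minimal core windows of the same edge with $a\le c$, then the minimality requirement (no proper subwindow is a core window) rules out $[c,d]\subseteq[a,b]$, which forces $a<c$ and $b<d$. Thus along the skyline of one edge, smaller start times always pair with strictly smaller end times. Now suppose toward a contradiction that $e^{\ast}$ had a minimal core window $[t1',t2']$ with $ts\le t1'<t1$. By co-monotonicity $t2'<t2=te$; but, as above, every minimal core window of $e^{\ast}$ contains its timestamp $te$, so $t2'\ge te$, a contradiction. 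Hence no such window exists, giving condition (2).

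The only nonroutine ingredient is the co-monotonicity claim, which I expect to be the main (small) obstacle: it is precisely what converts the start-time constraint $t1'<t1$ appearing in (2) into an end-time statement $t2'<te$ that I can contradict against $t2'\ge te$. Everything else follows mechanically from unfolding \refdef{tti} and applying \reflem{allwindow} to the maximum-timestamp edge of $C$.
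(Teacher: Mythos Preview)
Your proof is correct but takes a genuinely different route from the paper. The paper argues both conditions by contradiction at the level of \emph{all} edges of $C$: assuming no edge has a minimal core window satisfying (1), respectively (1) and (2), it shows that every edge of $C$ still has a minimal core window contained in $[ts,te-1]$, so by \reflem{allwindow} the temporal $k$-core of $[ts,te-1]$ coincides with $C$, contradicting that $[ts,te]$ is the TTI. By contrast, you work \emph{constructively with a single witness}: you pick the edge $e^{\ast}$ of $C$ whose timestamp equals $te$, observe that any core window of $e^{\ast}$ must contain $te$, and combine this with \reflem{allwindow} and the co-monotonicity of the skyline to force $t2=te$ and rule out any earlier window for $e^{\ast}$. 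Both arguments ultimately rely on the same skyline monotonicity (what the paper phrases as ``based on the definition of $\edgeskyline$, $t2'<t2$''), but your approach is shorter and more informative because it actually names the edge that witnesses the lemma; the paper's contradiction argument, on the other hand, makes the connection to shrinking the TTI more explicit, which foreshadows the role of the lemma in \refthm{validend3}.
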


\begin{proof}
We prove both conditions by contradiction. 
For the first condition, suppose we have a temporal $k$-core with a TTI $[ts, te]$ and there does not exist a minimal core window $[t1, t2]$ such that $ts \leq t1$ and $t2 = te$. Based on \reflem{allwindow}, there must exist a minimal core window of an edge in the TTI. Therefore, we can decrease $te$ until we find a window $[t1, t2]$ satisfying $ts \leq t1$ and $t2 = te$. As a result, we derive the same temporal $k$-core with a tighter TTI, which contradicts the initial assumption.
%
For the second condition, suppose we have a set of minimal core windows satisfying condition 1. For each of such windows, there exists another window $w' = [t1', t2']$ such that $ts \leq t1' < t1$. Based on the definition of $\edgeskyline$, we have $t2' < t2$. Therefore, the same temporal $k$-core exists by decreasing the end time of TTI based on \reflem{allwindow}, which produces a contradiction.
\end{proof}

The minimal core window $[t1,t2]$ in \reflem{validend1} is clearly the earliest one among all minimal core windows of $e$ starting not earlier than $ts$. \reflem{validend1} implies that only one minimal core window for each edge is required to enumerate all temporal $k$-cores starting from $ts$. Motivated by this, we define the active time of each minimal core window as follows to indicate whether the window should be considered for a specific start time.

\begin{definition}[active time]
\label{def:act}
Given a set of minimal core windows of an edge $e$, let the windows be ordered by increasing $t1$ (and consequently increasing $t2$). For a specific minimal core window $w = [t1,t2]$, let $w' = [t1',t2']$ the immediately preceding window in this order, such that $t1' < t1$. If such a preceding window $w'$ exists, the activation time of $w$, denoted as $w.active$, is $t1' + 1$. Otherwise, $w.active = 1$.
\end{definition}

Based on \refdef{act}, the condition (2) in \reflem{validend1} can be replaced by $[t1,t2].\yestime \le ts$.

\begin{example}
Consider the graph $G$ in \reffig{example} and the corresponding edge core window skyline in \reftab{ecs}. For the minimal core window $[3,5]$ of the edge $(v_1,v_2,3)$, its active time is $3$. That means we do not need to consider the window $[3,5]$ for the temporal $k$-core with the TTI starting from $1$ or $2$.
\end{example}

To derive the temporal $k$-core starting from $ts$, \reflem{validend1} and \refdef{act} motivate us to collect all minimal edge windows with active times not later than $ts$ and start times not earlier than $ts$. Then, we only consider the end times of the windows as the potential end times of temporal $k$-cores. 
Next, we present the second property to further filter end times by minimal core time windows of edges.





\begin{lemma}
\label{lem:validend2}
A temporal $k$-core exists with a TTI $[ts,te]$ only if there exists a minimal core window $[t1,t2]$ of an edge $e$ such that $t1=ts, t2 \le te$.

\end{lemma}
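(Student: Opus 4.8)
The plan is to prove the implication directly: assuming a temporal $k$-core $C$ whose TTI is exactly $[ts,te]$, I would exhibit a single edge whose minimal core window begins precisely at $ts$ and ends no later than $te$. The guiding intuition is that the left boundary of the TTI is \emph{forced} by some edge of $C$, and that very edge must carry a minimal core window anchored at $ts$.

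First I would invoke \refdef{tti}: since $[ts,te]$ is the minimal time window containing all edges of $C$, the smallest edge timestamp appearing in $C$ must equal $ts$. Otherwise every edge of $C$ would lie in $[ts+1,te]$, so the minimal enclosing window would not start at $ts$, contradicting the TTI. Fix such an edge $e_0=(u,v,ts)$. Next I would apply \reflem{allwindow} to $e_0 \in C$: there exists a minimal core window $[t1,t2]$ of $e_0$ with $[t1,t2]\subseteq[ts,te]$, which immediately gives $ts\le t1$ and $t2\le te$. It then remains only to pin $t1$ down to $ts$. For this I would observe that $e_0$ lies in the $k$-core of $G_{[t1,t2]}$, which requires $e_0\in G_{[t1,t2]}$ and hence $ts\in[t1,t2]$, i.e. $t1\le ts$. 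Combining $ts\le t1$ with $t1\le ts$ yields $t1=ts$, and together with $t2\le te$ this is exactly the window claimed by the lemma.

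The only place demanding care — and what I would treat as the main obstacle — is the justification that some edge of $C$ has timestamp exactly $ts$ and that its minimal core window necessarily contains that timestamp. Both facts follow from unfolding definitions rather than from any deeper argument: the former from the minimality in \refdef{tti}, the latter from the fact that an edge can only belong to a projected graph whose window covers its own time label. Once these two elementary observations are in place, nothing beyond \reflem{allwindow} is needed, so I expect the proof to be short. I would also note for context that this lemma complements \reflem{validend1}: the latter constrains the end time via the earliest qualifying window of some edge, whereas here the constraint is tied to the start time $ts$ itself, so the two necessary conditions can later be used jointly to filter the candidate end times.
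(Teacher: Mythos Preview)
Your proof is correct. The paper in fact omits the proof of \reflem{validend2} entirely (it falls under the blanket remark that straightforward proofs are left out), so there is no explicit argument to compare against; the short discussion following the statement merely recalls \reflem{windowstart} to remind the reader that some minimal core window must begin at $ts$. Your route---fixing an edge $e_0$ with timestamp exactly $ts$ via \refdef{tti}, then invoking \reflem{allwindow} and the trivial observation that $e_0\in G_{[t1,t2]}$ forces $t1\le ts$---is a clean, self-contained direct argument and is precisely the kind of short verification the paper leaves implicit.
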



Recall that there must exist at least one minimal core window starting from the same time of a temporal $k$-core based on \reflem{windowstart}. \reflem{validend2} indicates that the end time of the temporal $k$-core is not earlier than the earliest end time of those windows.
Below, we show that the aforementioned two necessary conditions are sufficient for the end time of temporal $k$-cores.

\begin{theorem}
\label{thm:validend3}
A temporal $k$-core with a TTI $[ts,te]$ exists if conditions in both \reflem{validend1} and \reflem{validend2} hold.
\end{theorem}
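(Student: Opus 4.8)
The plan is to take $C$ to be the temporal $k$-core of $G_{[ts,te]}$ (i.e. $C = C_{[ts,te]}$) and prove two things: that $C$ is non-empty, and that its tightest time interval $\gwindow(C)$ equals $[ts,te]$. Since $\gwindow(C)=[a,b]$ where $a$ and $b$ are the smallest and largest edge timestamps occurring in $C$, the whole argument reduces to showing $a=ts$ and $b=te$. Throughout I would lean on \reflem{allwindow}, which says that $C$ consists of exactly those edges possessing a minimal core window contained in $[ts,te]$. In particular every edge of $C$ has its timestamp inside its own window, hence inside $[ts,te]$, which gives the two easy bounds $a\ge ts$ and $b\le te$ for free. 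What remains is to show that the extreme timestamps are actually attained, and for this I only need the existence parts of \reflem{validend1} and \reflem{validend2} (the active-time condition (2) of \reflem{validend1} is not needed for sufficiency).

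The core technical step, which I expect to be the main obstacle, is an auxiliary claim: \emph{if $[p,q]$ is a minimal core window of some edge, then $C_{[p,q]}$ contains at least one edge of timestamp $p$ and at least one of timestamp $q$.} I would prove the $q$-case (the $p$-case being symmetric) by contradiction. Suppose no edge of timestamp $q$ lies in $C_{[p,q]}$. Then all edges of $C_{[p,q]}$ have timestamps in $[p,q-1]$, so $C_{[p,q]}$ is a subgraph of $G_{[p,q-1]}$ with minimum degree at least $k$; by maximality of the $k$-core this forces $C_{[p,q]}\subseteq C_{[p,q-1]}$. Combining this with the reverse inclusion $C_{[p,q-1]}\subseteq C_{[p,q]}$ (the $k$-core only grows when edges are added, since $G_{[p,q-1]}\subseteq G_{[p,q]}$) yields $C_{[p,q]}=C_{[p,q-1]}$. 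But then the edge witnessing $[p,q]$ would belong to the $k$-core of the proper subwindow $[p,q-1]$, contradicting minimality of $[p,q]$. The only delicate point in the entire proof is this monotonicity-plus-maximality reasoning; everything else is bookkeeping.

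With the claim established, both endpoints follow quickly. For $b=te$: \reflem{validend1} supplies an edge $e$ with a minimal core window $[t1,te]$ satisfying $ts\le t1$. The claim yields an edge of timestamp $te$ inside $C_{[t1,te]}$, and since $[t1,te]\subseteq[ts,te]$ gives $G_{[t1,te]}\subseteq G_{[ts,te]}$, monotonicity of the $k$-core gives $C_{[t1,te]}\subseteq C$. Thus an edge of timestamp $te$ lies in $C$, so $b\ge te$, and with $b\le te$ we get $b=te$. For $a=ts$: \reflem{validend2} supplies an edge with a minimal core window $[ts,t2]$ where $t2\le te$. The claim yields an edge of timestamp $ts$ inside $C_{[ts,t2]}\subseteq C$, so $a\le ts$, and with $a\ge ts$ we get $a=ts$.

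Finally, non-emptiness of $C$ is immediate, since either witness edge above already lies in $C$. Hence $\gwindow(C)=[ts,te]$ and $C$ is a non-empty temporal $k$-core whose TTI is exactly $[ts,te]$, which is precisely the existence statement of \refthm{validend3}. I would present the auxiliary boundary-edge claim as a short standalone fact first, then assemble the two endpoint arguments symmetrically, so that the reader sees clearly that \reflem{validend1} is responsible for pinning the end time and \reflem{validend2} for pinning the start time.
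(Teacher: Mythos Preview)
Your proof is correct, and it takes a genuinely different route from the paper's. The paper treats the two endpoints asymmetrically: for the end time it argues that the \emph{specific} edge $e$ furnished by \reflem{validend1} would drop out of the core if $te$ were decreased, and this step relies essentially on condition~(2) of \reflem{validend1} (the active-time condition) to rule out an earlier minimal core window of $e$ inside $[ts,te-1]$; for the start time it proves a separate auxiliary result (\reflem{onestart}) producing an edge of timestamp exactly $ts$. You instead establish a single symmetric boundary-edge claim---the $k$-core over any minimal core window $[p,q]$ must contain edges at both extremal timestamps $p$ and $q$---and apply it twice via the monotonicity inclusion $C_{[p,q]}\subseteq C_{[ts,te]}$. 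Your observation that condition~(2) of \reflem{validend1} is unnecessary for sufficiency is correct and is a genuine byproduct of your approach; the paper's argument for the end time does use it. Your auxiliary claim is essentially a two-sided generalization of the paper's \reflem{onestart} (whose proof sketch is the $p$-case of your argument), so the underlying idea is present in the paper but not exploited symmetrically. The trade-off: the paper's asymmetric treatment dovetails with how the enumeration algorithm actually tests the two lemmas, whereas your version is cleaner as a standalone sufficiency proof and yields the slightly sharper statement.
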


\stitle{Proof of \refthm{validend3}.}
Based on either \reflem{validend1} or \reflem{validend2}, there must exist a minimal core window contained in $[ts,te]$. This proves that a temporal $k$-core $C$ exists in $[ts,te]$. Next, we prove that $[ts,te]$ is the TTI of the temporal $k$-core $C$. To this end, let $C'$ be the temporal $k$-core of $[ts,te]$ satisfying \reflem{validend1} and \reflem{validend2}. We show that narrowing the window will remove edges from $C'$.
First, let $e$ be the edge meeting the condition in \reflem{validend1}. It is clear to see that $e$ is in $C'$, and decreasing the end time $te$ of the window will exclude $e$ from $C'$.
Next, we show that increasing the start time $ts$ of the window will also exclude an edge from $C'$ by the following lemma.


\begin{lemma}
\label{lem:onestart}
Let $[t1,t2]$ be a minimal core window such that there does not exist a minimal core window $[t1',t2']$ with $t1=t1'$ and $t2 > t2'$. An edge $e=(u,v,t1)$ exists and $[t1,t2]$ is a minimal core window of $e$.
\end{lemma}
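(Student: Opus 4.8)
The plan is to exploit the fact that a minimal core window is the TTI of the temporal $k$-core it induces. First I would let $C$ denote the temporal $k$-core of the window $[t1,t2]$. Because $[t1,t2]$ is itself a minimal core window (of some edge), the argument already used in the proof of \reflem{windowstart} shows that $[t1,t2]$ is the TTI of $C$: no edge of $C$ survives in any strict subwindow. Consequently the earliest timestamp occurring among the edges of $C$ is exactly $t1$. This immediately yields an edge $e = (u,v,t1) \in C$ whose timestamp equals $t1$, establishing the existence claim.

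Next I would verify that $[t1,t2]$ is a minimal core window of this particular $e$. The containment condition is immediate: since $e \in C$ and $C$ is the $k$-core of $G_{[t1,t2]}$, the edge $e$ lies in the $k$-core of $G_{[t1,t2]}$. The crux is the minimality condition, that $e$ leaves the $k$-core in every proper subwindow. I would argue by contradiction: suppose $e$ belongs to the $k$-core of some $[ts',te'] \subset [t1,t2]$. By \reflem{allwindow} there is then a minimal core window $[x,y]$ of $e$ with $[x,y] \subseteq [ts',te'] \subseteq [t1,t2]$.

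The key observation is that $e$ carries timestamp $t1$, so any window containing $e$ must contain $t1$; hence $x \le t1$, while $[x,y] \subseteq [t1,t2]$ forces $x \ge t1$, giving $x = t1$. Since $[x,y]$ is a proper subwindow we obtain $y < t2$. Thus $[t1,y]$ is a minimal core window starting at $t1$ and ending strictly before $t2$, contradicting the hypothesis that no minimal core window $[t1',t2']$ satisfies $t1' = t1$ and $t2' < t2$. This completes the minimality argument.

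The main obstacle I anticipate is the existence step rather than the minimality step: I must be sure that the edge attaining timestamp $t1$ genuinely lies in $C$, which hinges on $[t1,t2]$ being the TTI of $C$. I would take care to justify this via \reflem{windowstart} (a minimal core window is always the TTI of its induced core) instead of asserting it. Once the edge is pinned down, the subwindow analysis is routine, its only delicate point being the use of the timestamp $t1$ to force $x = t1$.
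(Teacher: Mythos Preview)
Your argument is correct and in fact more explicit than the paper's own proof. The paper proceeds by contradiction on existence only: it assumes no such edge $e$ exists, observes that removing all edges at time $t1$ then leaves the temporal $k$-core of $[t1,t2]$ unchanged, and concludes that any edge whose minimal core window is $[t1,t2]$ would still lie in the $k$-core of $[t1+1,t2]$, contradicting minimality. The paper does not separately verify that $[t1,t2]$ is a minimal core window of the particular edge found; that step is left implicit.

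Your route differs in two places. For existence you appeal to the TTI fact established inside the proof of \reflem{windowstart} (a minimal core window is the TTI of the $k$-core it induces) rather than the edge-removal argument; this is equally valid and arguably cleaner, since once $[t1,t2]$ is known to be the TTI of $C$ the edge at timestamp $t1$ comes for free. For minimality you give an explicit subwindow analysis via \reflem{allwindow}, using the timestamp of $e$ to force $x=t1$ and then the hypothesis of the lemma to rule out $y<t2$; this fills in precisely what the paper leaves to the reader. Both arguments are short, but yours is more self-contained, while the paper's is terser at the cost of an unwritten step.
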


\begin{proof}
Assume the edge $e$ does not exist. Removing all edges at $ts$ would not change the temporal $k$-core of $[ts,te]$. All edges with the minimal core window $[t1,t2]$ are still in the temporal $k$-core of $[t1+1,t2]$, which contradicts that $[t1,t2]$ is a minimal core window.
\end{proof}

Given of all minimal core windows satisfying the condition in \reflem{validend2}, let $[t1',t2']$ be one of them (i.e., $t1'=ts, t2' \le te$) with the earliest end time. \reflem{onestart} indicates there must exist an edge $e$ at $ts$ with a minimal core window $[t1',t2']$. Therefore, $e$ is in $C'$, but increasing $ts$ will exclude $e$ from the projected graph and the temporal $k$-core $C'$. We finish the proof of \refthm{validend3}.

\begin{figure}[tbp]
\centering
    \subfigure[$ts = 1$.]{
        \label{fig:L1}
        \includegraphics[width=85mm]{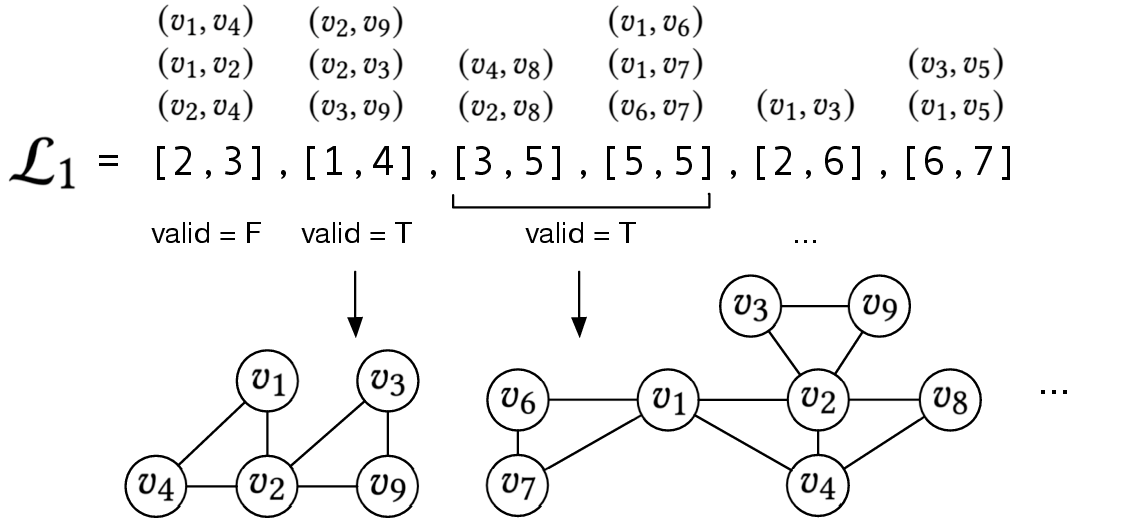}
    }
    \subfigure[$ts = 2$.]{
        \label{fig:L2}
        \includegraphics[width=80mm]{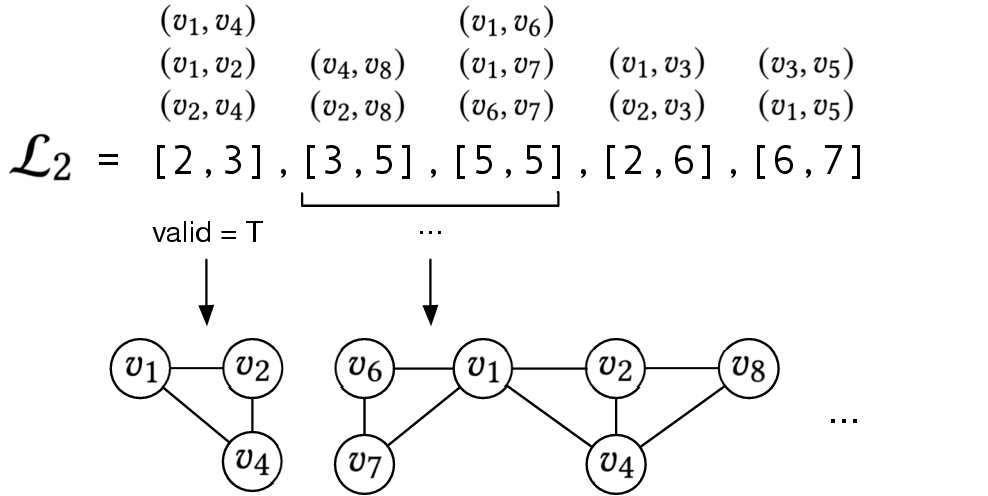}
    }
\caption{Illustration of enumerating $\dll_{ts}$ for $ts = 1$ and $ts = 2$.}
\label{fig:DLL}
\end{figure}

\stitle{The Structure \& Algorithm.} We now have necessary and sufficient conditions for the end time given a valid start time $ts$, which enables us to present the enumeration algorithm. To check the condition in \reflem{validend2} efficiently given the start time $ts$, we maintain the following data structure for enumerating all temporal $k$-cores starting from $ts$. The data structure, denoted by $\dll_{ts}$, includes all minimal core windows with active times (\refdef{act}) not later than $ts$. To improve the enumeration efficiency, we arrange all windows in $\dll_{ts}$ by ascending order of end times. 

\begin{example}
    Given the temporal graph $G$ in \reffig{example} and the minimal core windows of all edges for $k=  2$ in \reftab{ecs}, \reffig{L1} shows the data structure $\dll_{ts}$ for $ts = 1$ used in our algorithm. All minimal core windows with activate times not later than $1$ are included in the structure and arranged in a ascending order of end times. Specifically, the minimal core windows $[2,3]$ of $(v_1,v_4)$, $(v_1,v_2)$ and $(v_2,v_4)$ are ranked as the first three windows in $\dll_1$ since they have the earliest end time of 3 among all windows in $\dll_{1}$. The minimal core window $[1,4]$ of edges $(v_2, v_9)$, $(v_2, v_3)$ and $(v_3, v_9)$ are ranked as the fourth to sixth windows in $\dll_1$, as their end times are the second smallest. Note that each edge has at most one minimal core window in $\dll_{1}$.
\end{example}



\begin{algorithm}[t]
\caption{$\algstartanchor$}
\label{alg:start}
\KwIn{$ts$, $\dll_{ts}$}
\KwOut{distinct temporal cores for a specific start time $ts$}
$w \gets$ the first window in $\dll_{ts}$\;
$R \leftarrow \emptyset$\;
$valid \gets \kwfalse$\;
\While{$w \neq \kwnull$}{
    $R \gets R \cup \{w.edge\}$\;
    \lIf{$w.start = ts$}{$valid \gets \kwtrue$}
    \If{$valid = \kwfalse$}{
        $w \gets w.next$\;
        \textbf{continue}\;
    }
    \If{$w.next \neq \kwnull \land w.end = w.next.end$}{
        $w \gets w.next$\;
        \textbf{continue}\;
    }

    \textbf{output} $R$\;
    $w \gets w.next$\;
}
\end{algorithm}

Given the structure $\dll_{ts}$, we present the algorithm to enumerate all temporal $k$-cores starting from $ts$ in \refalg{start}. We will introduce how $\dll_{ts}$ is precomputed and maintained for different start times in \refsubsec{all}. Given a minimal core window $w$, $w.start$ and $w.end$ denote the start time and the end time of $w$, respectively. $w.next$ denotes the next window of $w$ in $\dll_{ts}$, and $w.edge$ denotes the corresponding edge of the window. We iteratively scan windows in $\dll_{ts}$ and add the corresponding edge into the edge set (line 5).
We use a flag $valid$ to check the condition in \reflem{validend2}. Suppose window $w^i$ is the first window we visit with star time $t1 = ts$. We can update the flag to $\kwtrue$ as the condition \reflem{validend2} holds. If the flag is $\kwfalse$ in line 7, we do not output the current result. The condition in line 10 means we have more windows with the same end time. We continue to the next one (line 11) and only output the result for the last window with the same end time (line 13).

\begin{example}
For the temporal graph in \reffig{example}, we enumerate all temporal 2-cores with $ts = 1$ by scanning $\dll_1$ (\reffig{L1}). The first three minimal core windows $[2,3]$ have start times not equal to $ts$, so their edges are added to $R$ without output, and the flag $valid$ remains False. The next three windows $[1,4]$ have start times of 1; their edges are added to the edge-set, which is output as a result with $TTI = [1,4]$, setting $valid$ to $\kwtrue$ for the remainder of the iteration. Each subsequent matching window updates the edge-set, outputting distinct results. For $ts = 2$ (\reffig{L2}), we scan $\dll_2$. The first three windows meet the condition, and their edge-set is output, representing the subgraph skipped in the previous round ($ts = 1$). The scan continues until the end of $\dll_2$. 

\end{example}

\begin{lemma}
Let $|\rset_{ts}|$ be the total size of all temporal $k$-cores starting from $ts$. Given the data structure $\dll_{ts}$, the time complexity of \refalg{start} is $O(|\rset_{ts}|)$.
\end{lemma}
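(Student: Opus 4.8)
The plan is to split the running time of \refalg{start} into a \emph{traversal} cost and an \emph{emission} cost, bound each separately, and then show the traversal cost is dominated by the result size. The algorithm visits every window of $\dll_{ts}$ exactly once through the pointer advance $w \gets w.next$, performing only $O(1)$ book-keeping per window (the set insertion in line 5, the flag test, and the end-time comparison in line 10). Hence the traversal contributes $O(|\dll_{ts}|)$. Each execution of line 13 copies the current edge set $R$ at cost $O(|R|)$; summing over all emissions gives exactly the total number of edges written out. Since line 10 suppresses every window of an equal-end-time block except the last, the emitted cores have pairwise distinct end times and are therefore pairwise distinct temporal $k$-cores, so the sum of their sizes is at most $|\rset_{ts}|$. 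This already yields a total bound of $O(|\dll_{ts}| + |\rset_{ts}|)$.

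The crucial remaining step is to absorb the traversal term $|\dll_{ts}|$ into $|\rset_{ts}|$, i.e.\ to prove $|\dll_{ts}| \le |\rset_{ts}|$. I would do this by exhibiting one output core that already contains a distinct edge for every window. Observe that $R$ grows monotonically: line 5 only inserts and no line removes. Because each edge contributes at most one window to $\dll_{ts}$, each scanned window inserts a \emph{fresh} edge, so once the whole list has been traversed $R$ holds exactly $|\dll_{ts}|$ edges. It then suffices to show this maximal $R$ is actually emitted. Assuming $ts$ is a valid start time (otherwise $\rset_{ts}=\emptyset$, and in the surrounding framework of \refsubsec{all} the routine is only invoked for valid start times), the flag $valid$ is set to \kwtrue{} at the first window whose start equals $ts$ and remains true thereafter. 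The last window $w$ of $\dll_{ts}$ has $w.next = \kwnull$, so the guard in line 10 fails and line 13 fires; the core emitted there is the final $R$, of size $|\dll_{ts}|$. As this core is one of the distinct results counted in $|\rset_{ts}|$, we obtain $|\dll_{ts}| \le |\rset_{ts}|$.

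Combining the two steps gives $O(|\dll_{ts}| + |\rset_{ts}|) = O(|\rset_{ts}|)$. I expect the second step, not the per-window accounting, to be the main obstacle: one must justify that the traversal cost is genuinely charged to a real output core rather than to skipped windows, which relies on the monotone growth of $R$ together with the fact that the final window always triggers an emission, and one must invoke the at-most-one-window-per-edge property of $\dll_{ts}$ so that window count and edge count coincide. A minor care point, which I would state explicitly, is the degenerate case where $ts$ admits no temporal $k$-core: there the claim holds only because such start times are never passed to \refalg{start}.
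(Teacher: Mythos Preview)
Your argument is correct. The paper actually omits the proof of this lemma (it states earlier that proofs are omitted when straightforward), so there is no detailed argument to compare against. However, the single substantive step you supply---that $|\dll_{ts}| \le |\rset_{ts}|$ because the last window of $\dll_{ts}$ always triggers an emission and the emitted $R$ then contains one distinct edge per window---is precisely the inequality the paper later invokes without justification in the proof of \refthm{time} (``$O(|\dll_{ts}|)$ is bounded by $O(|\rset_{ts}|)$''). Your decomposition into traversal cost $O(|\dll_{ts}|)$ plus emission cost bounded by $|\rset_{ts}|$, together with the at-most-one-window-per-edge property of $\dll_{ts}$ and the observation that \refalg{start} is only called for valid start times, fills in exactly what the paper leaves implicit.
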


\begin{algorithm}[tb!]
\caption{\algenum}
\label{alg:enum}
\KwIn{a temporal graph $G$, an integer $k$, a time range $[Ts,Te]$, the $\edgeskyline$ of $G$ for $k$}
\KwOut{all distinct temporal $k$-cores in $[Ts,Te]$}
\ForEach{edge $e$ in $\edgeskyline$}{
    \ForEach{$0 \le i < |\edgeskyline(e)|$}{
        \tcc{compute active time}
        \lIf{$i=0$}{$\edgeskyline(e)[0].\yestime \gets Ts$}
        \lElse{$\edgeskyline(e)[i].\yestime \gets \edgeskyline(e)[i-1].start+1$}        
    }
}

\ForEach{$Ts\le t \le Te$}{
    $B_a[t] \gets \emptyset$\;
    $B_s[t] \gets \emptyset$\;
}
sort labels of $\edgeskyline$ in ascending order of end times\;

\ForEach{$w \in \edgeskyline$}{
    $B_a[w.\yestime].push(w)$\;
    $B_s[w.start].push(w)$\;
}

$\dll \gets$ an empty doubly linked list\;
\ForEach{$Ts \le t \le Te$}{
    \If{$t > Ts$}{
        \ForEach{$w \in B_s[t-1]$}{
            $\algdlldelete(w)$\;
        }
    }

    $h \gets$ the dummy head node for $\dll$\;
    \ForEach{$w \in B_a[t]$}{
        \While{$h.next \neq \kwnull \land h.next < w$}{
            $h \gets h.next$\;
        }
        $\algdllinsert(w,h,h.next)$\;
        $h \gets w$\;
    }
    \lIf{$B_s[t] = \emptyset$}{\textbf{continue}}

    $\algstartanchor(\dll,t)$\;

}

\proc{$\algdlldelete(w)$}{
    $w.pre.next \gets w.next$\;
    \lIf{$w.next \neq \kwnull$}{$w.next.pre \gets w.pre$}

}

\proc{$\algdllinsert(w,a,b)$}{
    $w.next \gets b$\;
    $w.pre \gets a$\;
    $a.next \gets w$\;
    \lIf{$b \neq \kwnull$}{$b.pre \gets w$}
}

\end{algorithm}

\subsection{Enumeration for All Start Times}
\label{subsec:all}

We now study how to enumerate temporal k-cores for all start times. This is achieved by maintaining the data structure $\dll_{ts}$ for different start times $ts$. Recall that we need to maintain a set of minimal core windows in ascending order of end times in \refsubsec{start}. Assume we increase the start time from $ts$ to $ts+1$. We expect to remove all minimal core windows with the start time $ts$ from the order and add all minimal core windows with the active time $ts+1$ to the order. To this end, we implement the order for each start time as a doubly linked list and present our final algorithm for temporal $k$-core enumeration in \refalg{enum}.

\refalg{enum} takes the $\edgeskyline$ computed by \refalg{ct_comp} as input. We first compute the active time of each minimal core window in lines 1-4. Recall that the minimal core windows of each edge are computed by increasing the start time in \refsubsec{vertexcoretime}. The windows returned by \refalg{ct_comp} are naturally arranged chronologically. This enables us to sequentially scan minimal core windows of each edge and derive the active time for each window. The time complexity of this process is $O(|\edgeskyline|)$.

After computing the active time, $B_a$ (line 10) and $B_s$ (line 11) are used to collect windows for each active time and start time, respectively. Lines 8--11 arrange all windows in a certain order in $B_a$ and $B_s$. The structure for each time key in $B_a$ and $B_s$ is an array, and the $push$ function adds the item to the end of the array.

Given a window $w$, we use $w.next$ and $w.pre$ to denote the next item and the previous item in the doubly linked list, respectively. $\algdlldelete$ and $\algdllinsert$ operators are used to insert an item and delete an item in the doubly linked list, respectively.
When increasing $t$, we remove all windows with the start time $t-1$ from the doubly linked list in lines 14--16 because the windows cannot exist given the new start time $t+1$. Removing those windows is safe because they are never considered in the temporal $k$-cores given the increase of the start time.
We add all windows with the active time $t$ to the order in lines 18--22. Given that all windows in $B_a[t]$ are in ascending order of end times, we search the doubly linked list in a single direction and start from the ending item $h$ for the $w$ in the previous iteration of line 17.
$B_s[t]=\emptyset$ in line 23 means no minimal core window exists with the start time $t$. No temporal $k$-core exists starting from $t$ based on \reflem{windowstart}. We invoke $\algstartanchor$ at the end to output all temporal $k$-cores with a start time of $ts$.

\begin{example}
For the temporal graph in \reffig{example} and its $\edgeskyline$ for $k = 2$ in \reftab{ecs}, we enumerate all temporal 2-cores for the query time range $[1,6]$. Starting with $ts = 1$, we insert relevant minimal core windows into $\dll_1$ (\reffig{L1}) and apply \refalg{start} to enumerate temporal 2-cores. For $ts = 2$, minimal core windows with start times less than 2, such as $(v_2,v_9)$, $(v_2,v_3)$ and $(v_3,v_9)$, are removed from $\dll_1$. This is done efficiently by linking windows in constant time. Next, windows with an activation time of 2, like $[2,6]$ for $(v_2,v_3)$, are inserted into $\dll$ based on their position, forming $\dll_2$. \refalg{start} is then used to enumerate temporal 2-cores for $ts = 2$. This process repeats for all $ts$ values until $ts = Te$.

\end{example}

\begin{theorem}
\label{thm:time}
The time complexity of \refalg{enum} is $O(|\rset|)$, where $|\rset|$ represents the total size of all resulting temporal $k$-cores.
\end{theorem}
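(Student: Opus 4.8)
The plan is to split the running time of \refalg{enum} into three parts and charge each to either the input size $|\edgeskyline|$ or the output size $|\rset|$: (i) the preprocessing of lines 1--11, (ii) the maintenance of the doubly linked list $\dll$ across all start times (the deletions of lines 14--16 and, above all, the insertions of lines 17--22), and (iii) the calls to $\algstartanchor$ in line 24. Part (iii) is immediate: by the per-start-time lemma preceding \refalg{enum}, each call $\algstartanchor(\dll,t)$ costs $O(|\rset_t|)$, and by \reflem{windowstart} every temporal $k$-core is emitted exactly once, at the unique valid start time equal to the left endpoint of its TTI, so $\sum_t |\rset_t| = |\rset|$ and the calls total $O(|\rset|)$. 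For part (i), computing active times (lines 1--4), filling $B_a$ and $B_s$, and sorting the windows by end time with a bucket sort over $[Ts,Te]$ all run in $O(|\edgeskyline| + \tmax)$; since each minimal core window of an edge $e$ is the TTI of a distinct result core containing $e$ (as in the proof of \reflem{windowstart}), the map from windows to (result, incident edge) pairs is injective, giving $|\edgeskyline| \le |\rset|$.

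The crux, and the step I expect to be the main obstacle, is bounding the insertion search of lines 19--21, whose naive cost per start time is $O(|\dll_t|)$ and could a priori sum to $\Theta(\sum_t |\dll_t|) = \Theta(m\,\tmax)$. The key structural observation I would establish first is that \emph{the list is modified only at start times immediately following a valid one}. Indeed, $B_a[t] \neq \emptyset$ for $t > Ts$ forces some window to have active time $t$, hence (by \refdef{act}) a predecessor window of the same edge with start time $t-1$, hence $B_s[t-1] \neq \emptyset$ and $t-1$ valid; the deletions of lines 14--16 likewise fire only when $B_s[t-1]\neq\emptyset$. Moreover the predecessor map is injective, so $|B_a[t]| \le |B_s[t-1]|$, which yields $|\dll_t| = |\dll_{t-1}| - |B_s[t-1]| + |B_a[t]| \le |\dll_{t-1}|$: the list size is non-increasing.

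With this in hand I would bound the traversal. Because $h$ is reset to the head only once per start time (line 17) and then advances monotonically, its total advancement at a fixed $t$ is at most the final rank, i.e.\ $O(|\dll_t|)$, and it is nonzero only for the $t$ identified above. If $t$ is itself valid, then $|\dll_t| = O(|\rset_t|)$, since $\algstartanchor$ scans all of $\dll_t$ within its $O(|\rset_t|)$ budget (equivalently, every window of $\dll_t$ lies in the largest core emitted at $t$); this traversal is absorbed into part (iii). If $t$ is invalid but performs updates, then $t-1$ is valid and the non-increasing property gives $|\dll_t| \le |\dll_{t-1}| = O(|\rset_{t-1}|)$, which I charge to the valid start time $t-1$; since only $t=(t-1)+1$ can charge it, each $|\rset_{t-1}|$ is charged $O(1)$ times. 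Summing, the total traversal is $O(|\edgeskyline|) + \sum_{t}O(|\rset_t|) = O(|\rset|)$, and the $O(1)$-per-window insert/delete operations contribute only a further $O(|\edgeskyline|)$.

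Combining the three parts gives $O(|\edgeskyline| + \tmax + |\rset|)$, and since $|\edgeskyline| \le |\rset|$ the bound is $O(|\rset|)$ once the $O(\tmax)$ overhead of iterating $t$ over $[Ts,Te]$ and initializing the buckets is accounted for; this last term is dominated by the result size whenever the query returns any core, and can be eliminated outright by iterating only over the $O(|\edgeskyline|)$ nonempty buckets of $B_a$ and $B_s$. The single point demanding the most care is the isolation of the insertion search at non-result-producing start times from those that do produce output; the observation that list modifications occur only immediately after a valid start time, together with the monotonic (non-increasing) list size, is exactly what lets the traversal at such a $t$ be charged back to the preceding valid start time.
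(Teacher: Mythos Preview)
Your proposal is correct and follows the same three-part decomposition as the paper's own proof: bound the preprocessing by $O(|\edgeskyline|)\le O(|\rset|)$, bound each iteration's list work by $O(|\dll_{ts}|)$, and bound the $\algstartanchor$ calls by $\sum_{ts}|\rset_{ts}|=|\rset|$. Where you differ is in rigor rather than route. The paper's proof simply asserts that $|\dll_{ts}|=O(|\rset_{ts}|)$ for every $ts$ and sums; this inequality is only true at \emph{valid} start times, and the paper does not address iterations where $B_a[ts]\neq\emptyset$ but $B_s[ts]=\emptyset$, at which the insertion scan can still cost $\Theta(|\dll_{ts}|)$ while $|\rset_{ts}|=0$. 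You close exactly this gap with two observations the paper omits: (i) $B_a[t]\neq\emptyset$ for $t>Ts$ forces $t-1$ to be valid, and (ii) the predecessor map gives $|B_a[t]|\le|B_s[t-1]|$, so $|\dll_t|$ is non-increasing and the cost at an invalid $t$ can be charged to $|\rset_{t-1}|$. You also flag the $O(\tmax)$ bucket-initialization and loop overhead that the paper silently absorbs. The one point to clean up in your write-up is the boundary case $t=Ts$: if $Ts$ is invalid there is no $t-1$ to charge, but the traversal there is just $O(|B_a[Ts]|)\le O(|\edgeskyline|)$, which your final summation already accommodates.
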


\begin{proof}
The computation of active times (lines 1-4) and the initialization of the buckets (lines 5 - 11) takes $O(|\edgeskyline|)$, which is bounded by $|\rset|$. For each start time $ts$ in the main process, the deletion process (lines 14 - 16) takes $O(|B_s(ts)|)$, which is bounded by $O(|\mathcal{L}_{ts-1}|)$; the insertion process (lines 18 - 22) takes $O(|\dll_{ts}|)$; line 24 (AS-Output) takes $O(|\dll_{ts}|)$. Let $|\rset_{ts}|$ denote the total size of the set of temporal $k$-cores for a specific start time $ts$. Since $O(|\dll_{ts}|)$ is bounded by $O(|\rset_{ts}|)$, the entire main process (lines 13-24) has a time complexity of $O(|\Sigma_{ts \in [1,t_{max}]}\rset_{ts}|) = O(|\rset|)$. Therefore, the time complexity of the algorithm is $O(|\rset|)$.
\end{proof}

Based on \refthm{cttime} and \refthm{time}, the total time complexity of \refalg{ct_comp} and \refalg{enum} is $O(|\vctidx| \cdot \avgdegree+|\rset|)$.

\section{Experiments}
\label{sec:experiments}

\begin{figure*}[htbp]
\centering
\includegraphics[scale=0.35]{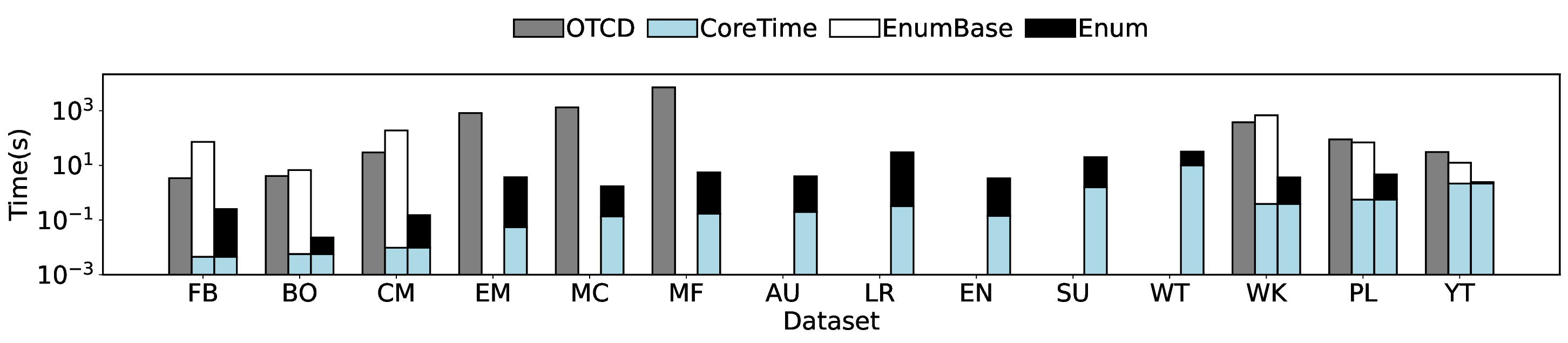}
\caption{Average running time for 30\% $k_{max}$ on query time ranges of 10\% $t_{max}$.}
\label{fig:otcd_comp}
\end{figure*}


\begin{figure*}[htbp]
\centering
    \subfigure[CollegeMsg]{
        \label{fig:CollegeMsg_varyk}
        \includegraphics[width=40mm]{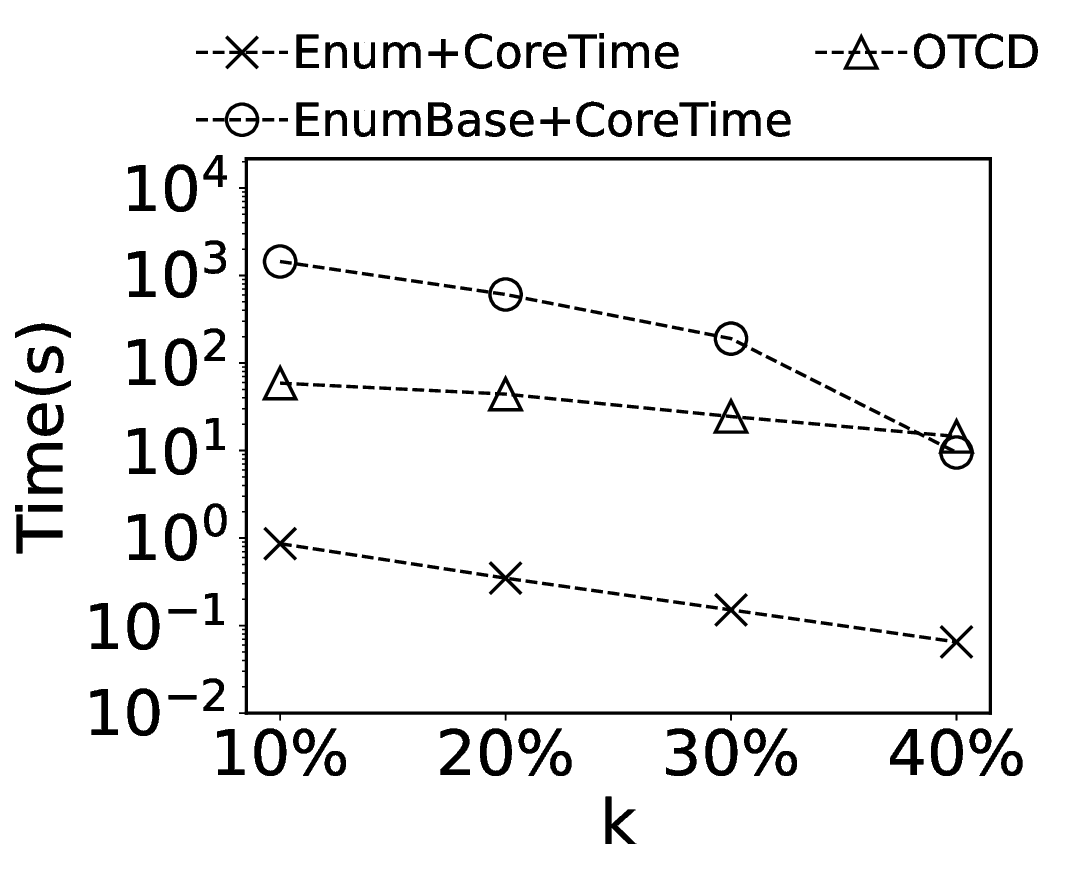}
    }
    \subfigure[Email]{
        \label{fig:email_varyk}
        \includegraphics[width=40mm]{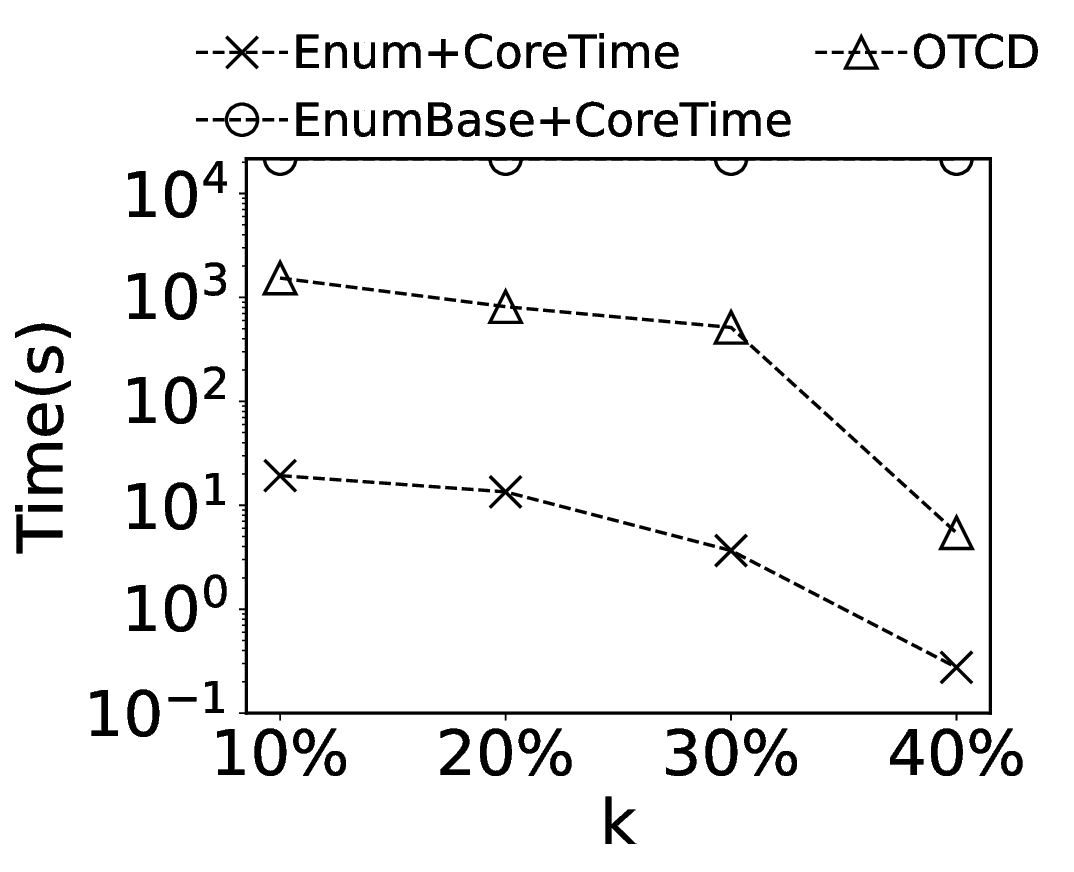}
    }
    \subfigure[Wikitalk]{
        \label{fig:wikitalk_varyk}
        \includegraphics[width=40mm]{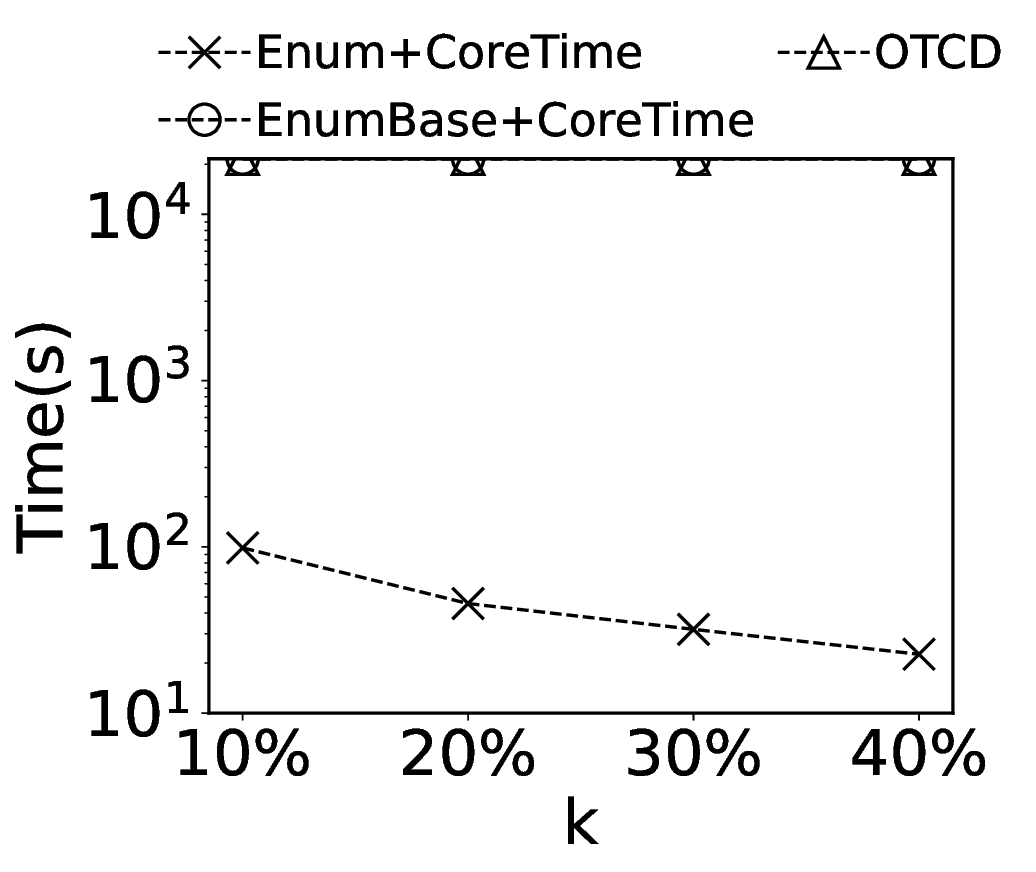}
    }
    \subfigure[Prosper]{
        \label{fig:prosper_varyk}
        \includegraphics[width=40mm]{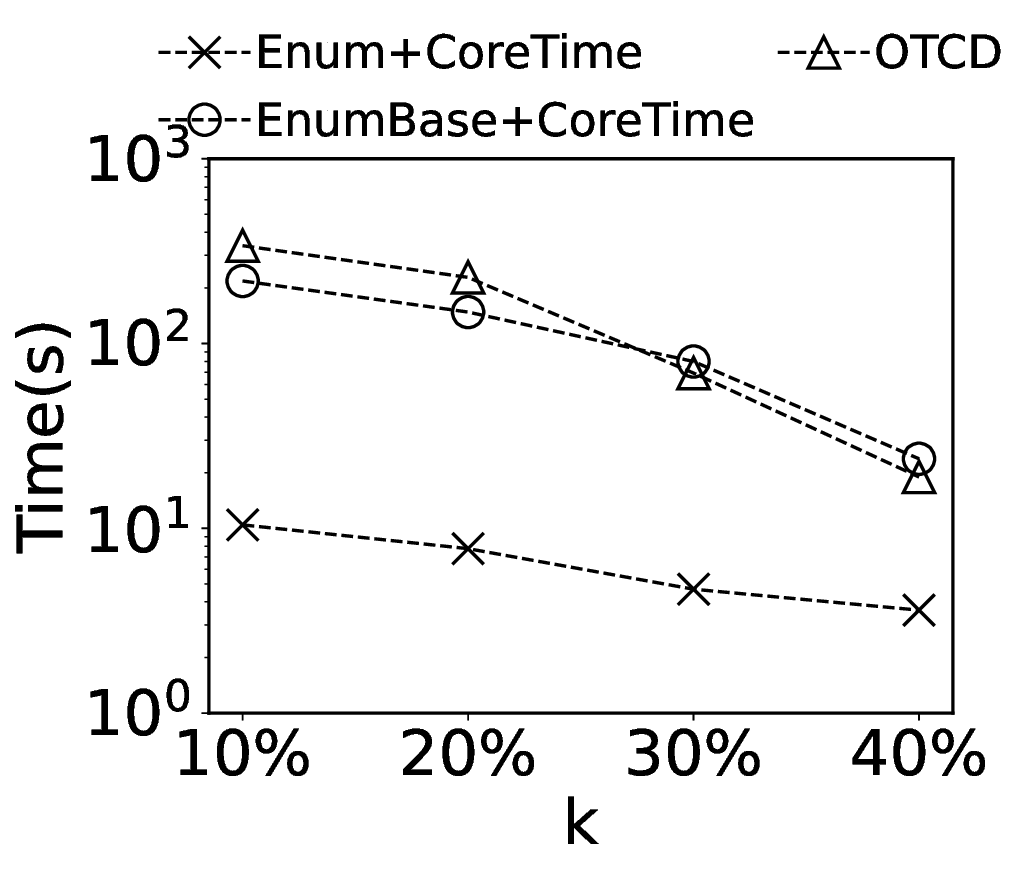}
    }
\caption{The average running time on varying $k$ between 10\%, 20\%, 30\% and 40\% of $k_{max}$.}
\label{fig:varyk}
\end{figure*}

\begin{figure*}[htbp]
\centering
    \subfigure[CollegeMsg]{
        \label{fig:CollegeMsg_varyt}
        \includegraphics[width=40mm]{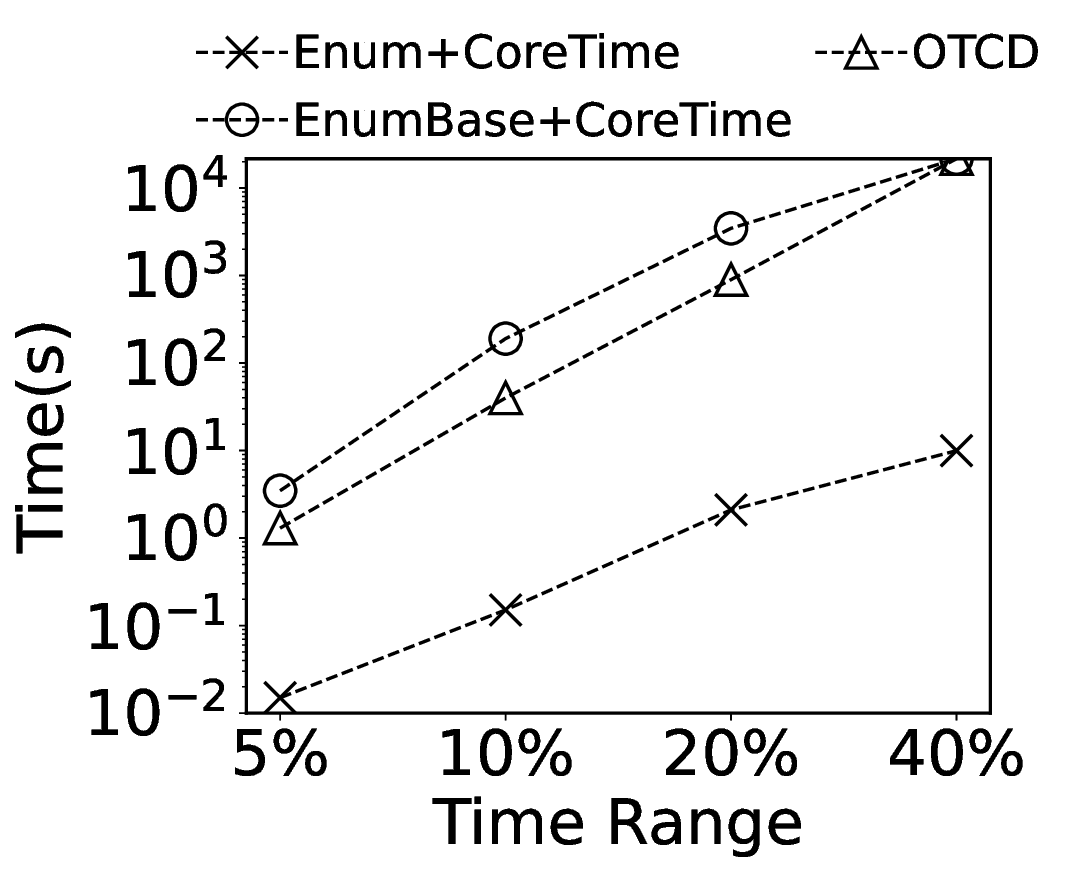}
    }
    \subfigure[Email]{
        \label{fig:email_varyt}
        \includegraphics[width=40mm]{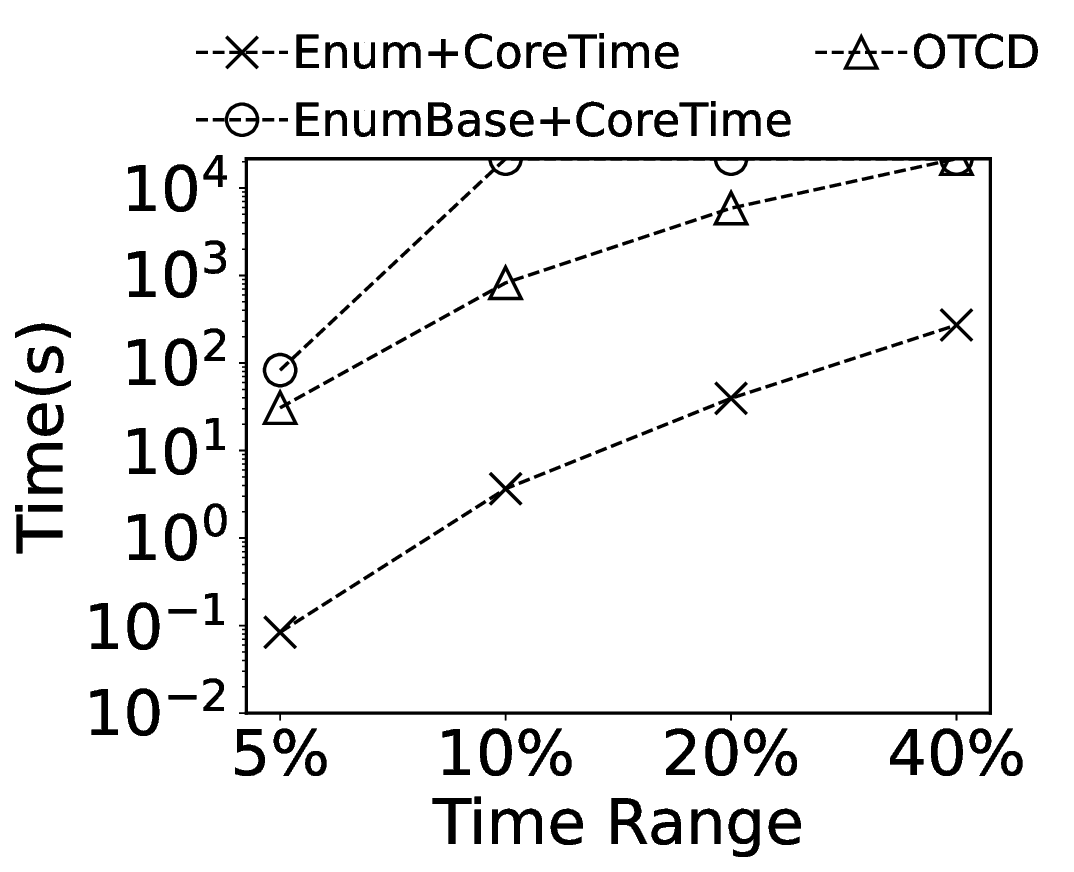}
    }
    \subfigure[Wikitalk]{
        \label{fig:wikitalk_varyt}
        \includegraphics[width=40mm]{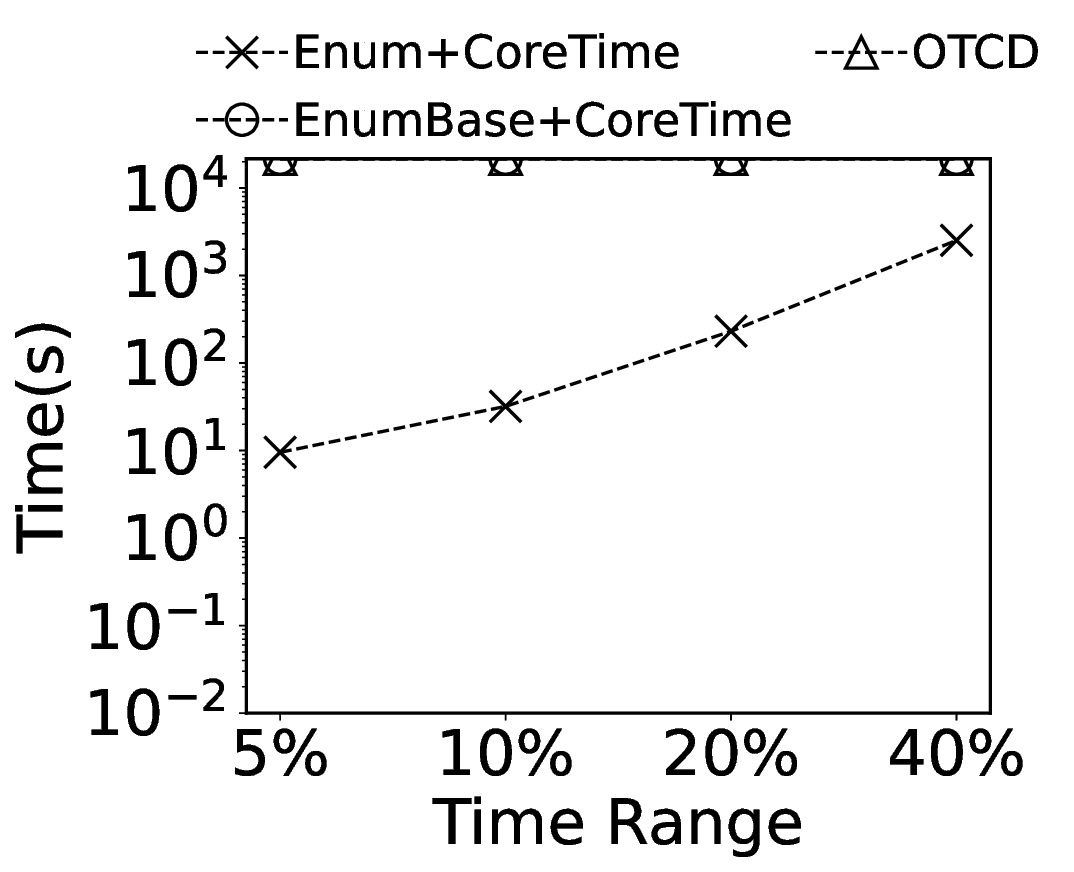}
    }
    \subfigure[Prosper]{
        \label{fig:prosper_varyt}
        \includegraphics[width=40mm]{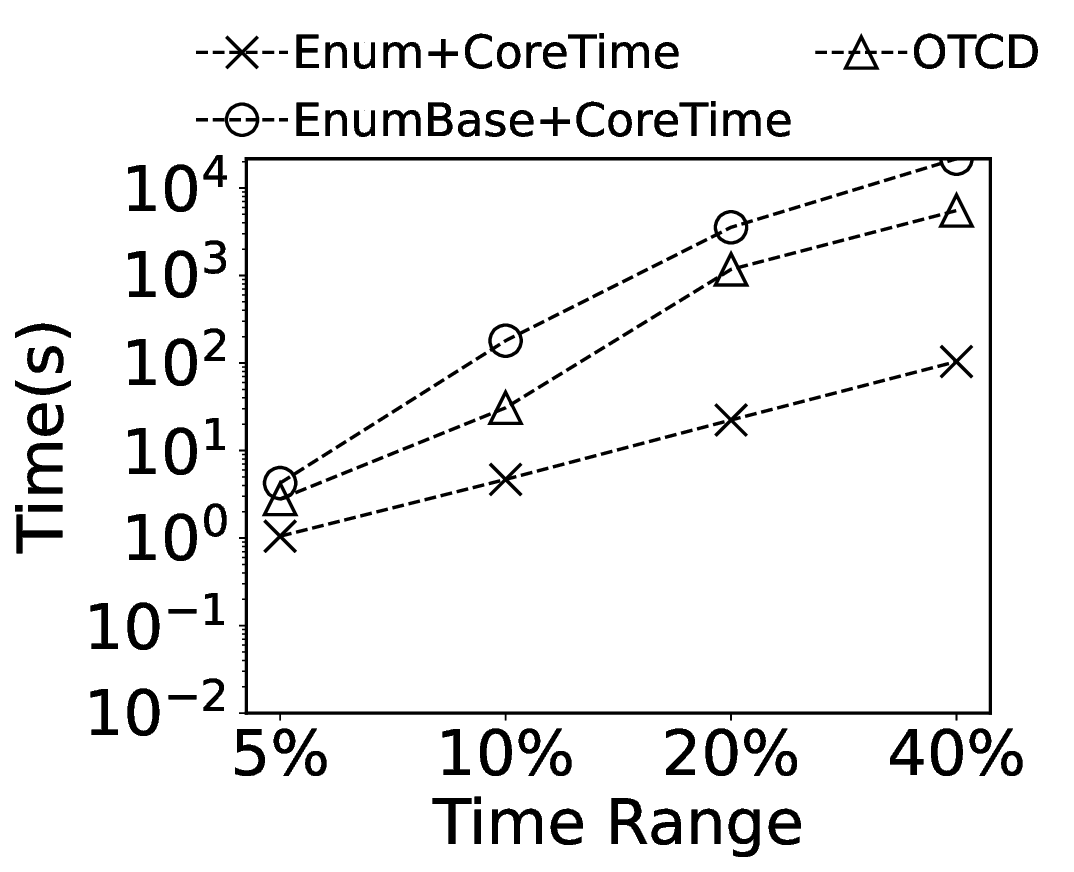}
    }
\caption{The average running time on varying query time range between 5\%, 10\%, 20\% and 40\% of $t_{max}$.}
\label{fig:varyt}
\end{figure*}

In this section, we conduct experiments to evaluate the effectiveness and efficiency of our proposed algorithm. We run all experiments on a Linux machine with Intel i9-12900K CPU and 500GB RAM. All algorithms are implemented in C++ and compiled using g++ compiler at -O3 optimization level. Implementation of OTCD is provided by the author of \cite{Yang2023}.

\stitle{Datasets.} We use fourteen real-world temporal graphs of varying sizes for our experiments. Table \ref{tab:datasets} presents their basic statistics, where $t_{max}$ is the number of distinct timestamps in the graph, and $k_{max}$ is the maximum core number among all vertices. These datasets are sourced from SNAP \cite{snap} and the KONECT Project \cite{konect}.

\stitle{Parameters.} We vary the size of the query time range to 5\%, 10\%, 20\%, and 40\% of $t_{max}$, with 10\% as the default. For the integer $k$, we vary it from 10\%, 20\%, 30\%, and 40\% of $k_{max}$, using 30\% as the default. We randomly select 100 query time ranges and record the average running time. Each query time range is guaranteed to contain at least one temporal $k$-core. We set the time limit to 6 hours, which is way beyond the maximum running time required for our final algorithm. The time points used to construct the query ranges are selected directly from the raw timestamps in the dataset, without refining or resampling their granularity. These ranges may be overlapping or disjoint, as we do not impose any artificial constraints, in order to better reflect realistic and diverse query scenarios.

\begin{table}[t]
  \caption{Datasets.}
  \label{tab:datasets}
  \centering
  \resizebox{\linewidth}{!}{  
  \begin{tabular}{lrrrr}
    \toprule
    Name & $|V|$ & $|E|$ & $t_{max}$ & $k_{max}$\\
    \midrule
    FB-Forum (FB) & 899 & 33,786 & 33,482 & 19\\
    BitcoinOtc (BO) & 5,881 & 35,592 & 35,444 & 21\\
    CollegeMsg (CM) & 1,899 & 59,835 & 58,911 & 20\\
    Email (EM) & 986 & 332,334 & 207,880 & 34\\
    Mooc (MC) & 7,143 & 411,749 & 345,600 & 76\\
    MathOverflow (MO) & 24,818 & 506,550 & 505,784 & 78\\
    AskUbuntu (AU) & 159,316 & 964,437 & 960,866 & 48\\
    Lkml-reply (LR) & 63,399 & 1,096,440 & 881,701 & 91\\
    Enron (ER) & 87,273 & 1,148,072 & 220,364 & 53\\
    SuperUser (SU) & 194,085 & 1,443,339 & 1,437,199 & 61\\
    WikiTalk (WT) & 1,219,241 & 2,284,546 & 1,956,001 & 68\\
    Wikipedia (WK) & 91,340 & 2,435,731 & 4,518 & 117\\
    ProsperLoans (PL) & 89,269 & 3,394,979 & 1,259 & 111\\
    Youtube (YT) & 3,223,589 & 9,375,374 & 203 & 88\\
    \bottomrule
  \end{tabular}
  }
\end{table}

\begin{figure*}[tbp]
\centering
\includegraphics[scale=0.35]{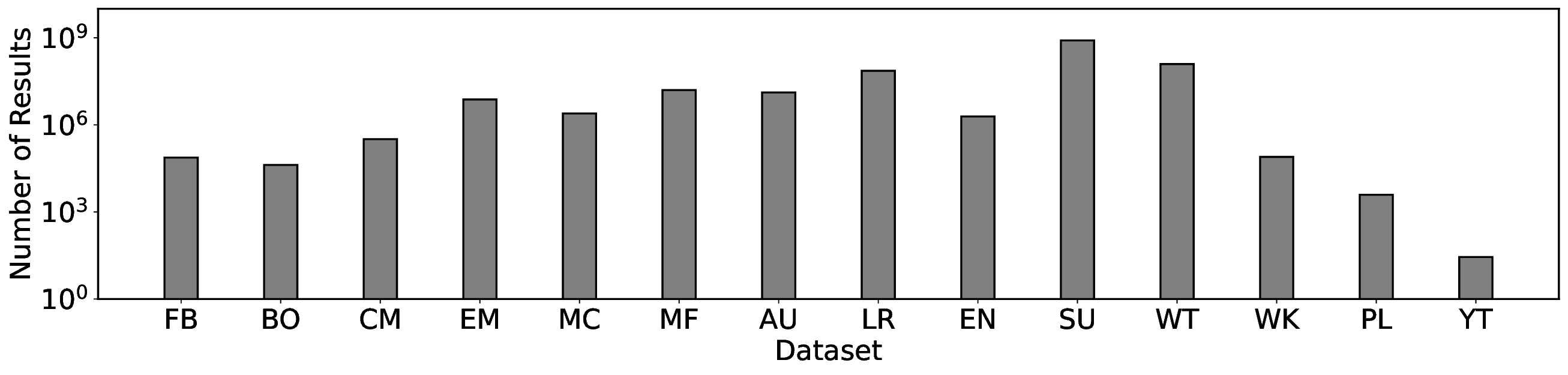}
\caption{The average number of temporal $k$-cores for each dataset run with the default parameters.}
\label{fig:num_res}
\end{figure*}

\begin{figure*}[htbp]
\centering
    \subfigure[CollegeMsg]{
        \label{fig:CollegeMsg_varyk_res}
        \includegraphics[width=40mm]{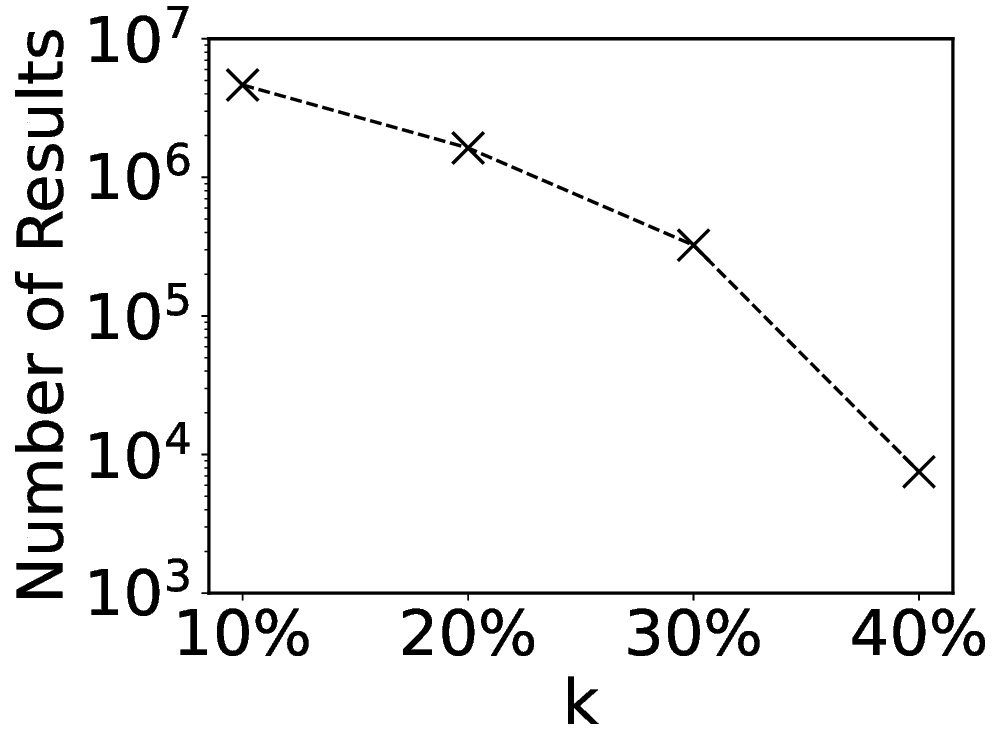}
    }
    \subfigure[Email]{
        \label{fig:email_varyk_res}
        \includegraphics[width=40mm]{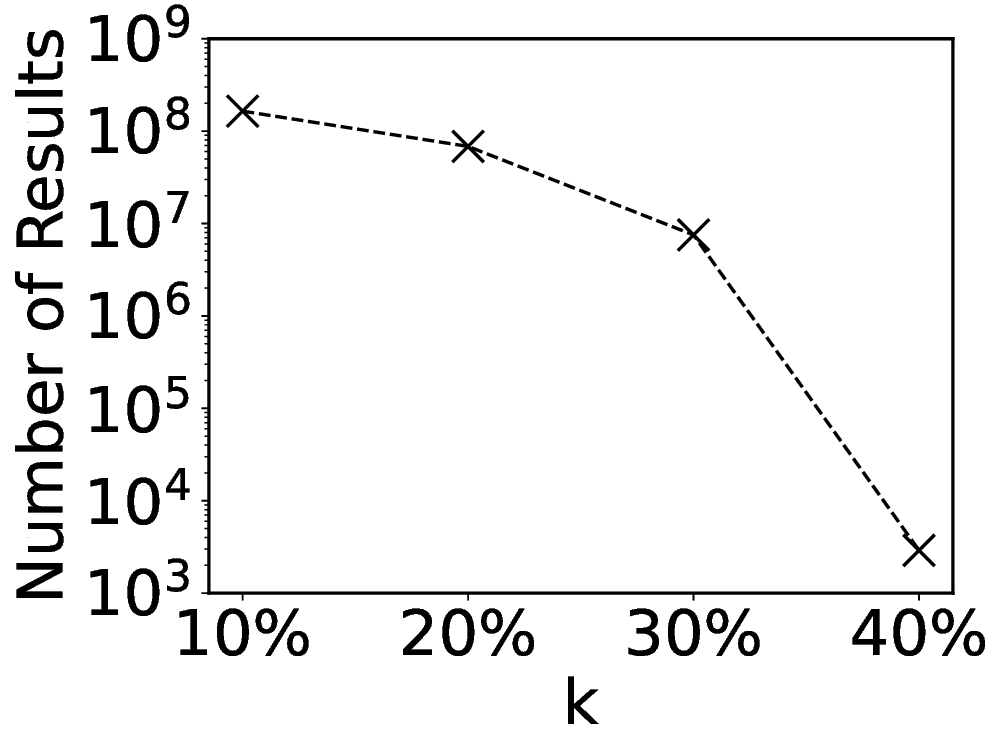}
    }
    \subfigure[Wikitalk]{
        \label{fig:wikitalk_varyk_res}
        \includegraphics[width=40mm]{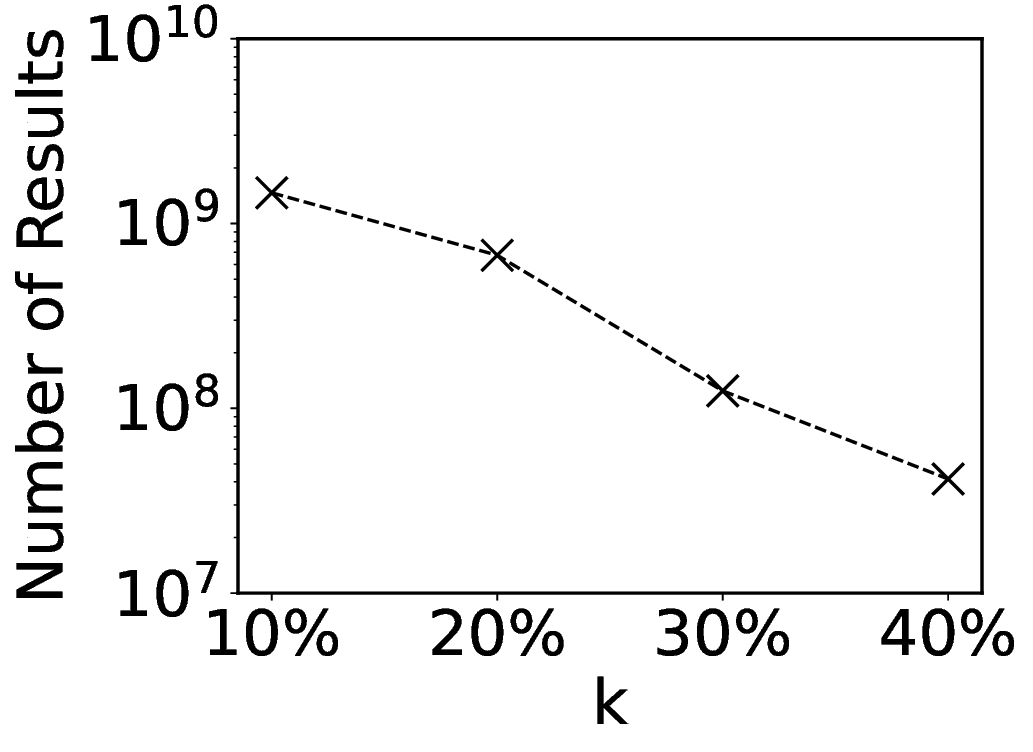}
    }
    \subfigure[Prosper]{
        \label{fig:prosper_varyk_res}
        \includegraphics[width=40mm]{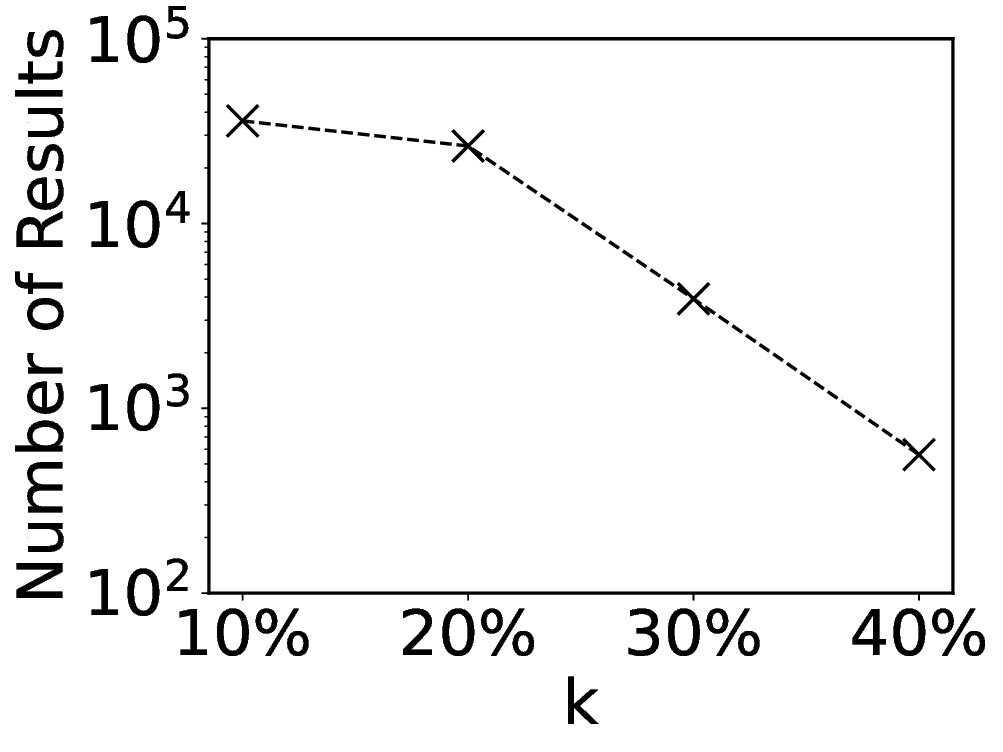}
    }
\caption{The average number of temporal $k$-cores on varying $k$ between 10\%, 20\%, 30\% and 40\% of $k_{max}$.}
\label{fig:varyk_res}
\end{figure*}

\begin{figure*}[htbp]
\centering
    \subfigure[CollegeMsg]{
        \label{fig:CollegeMsg_varyt_res}
        \includegraphics[width=40mm]{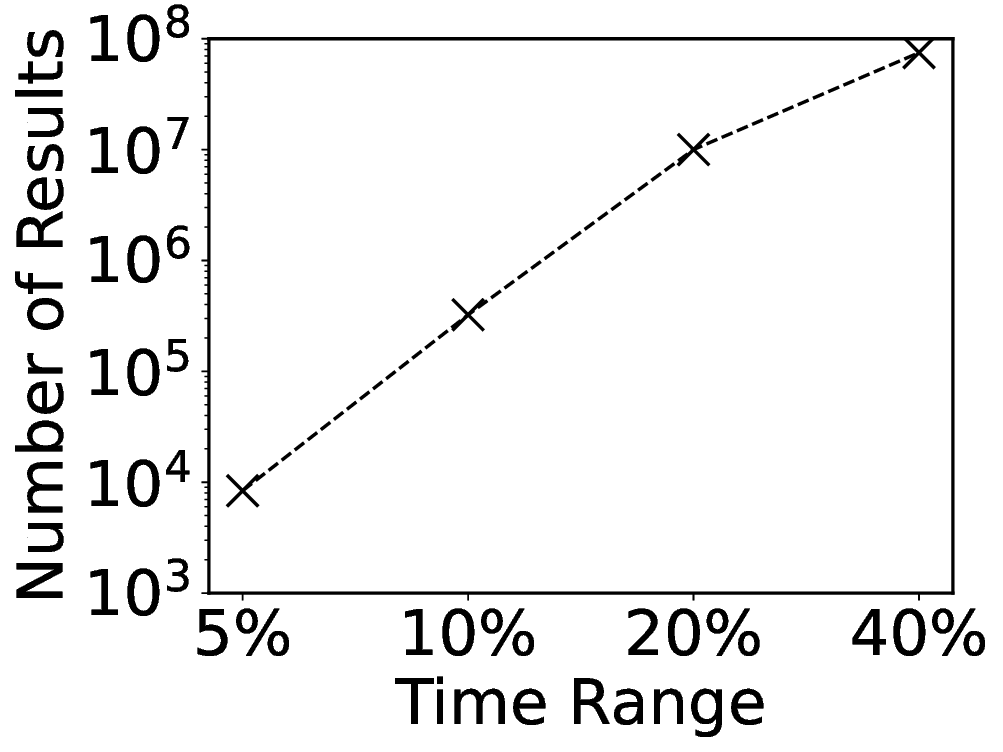}
    }
    \subfigure[Email]{
        \label{fig:email_varyt_res}
        \includegraphics[width=40mm]{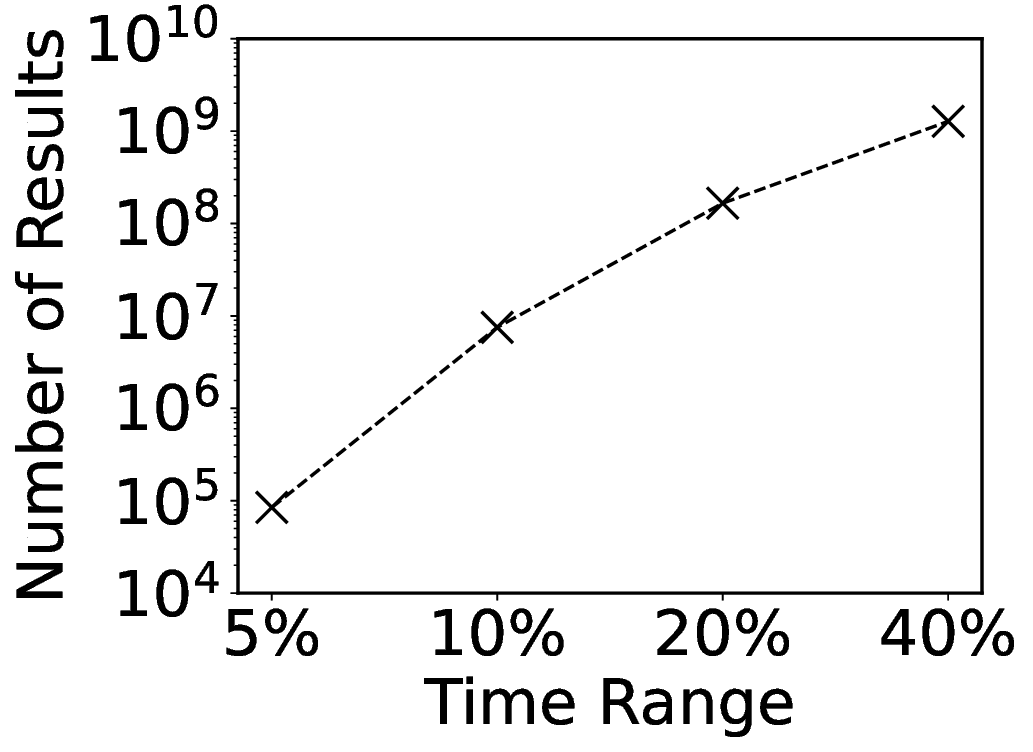}
    }
    \subfigure[Wikitalk]{
        \label{fig:wikitalk_varyt_res}
        \includegraphics[width=40mm]{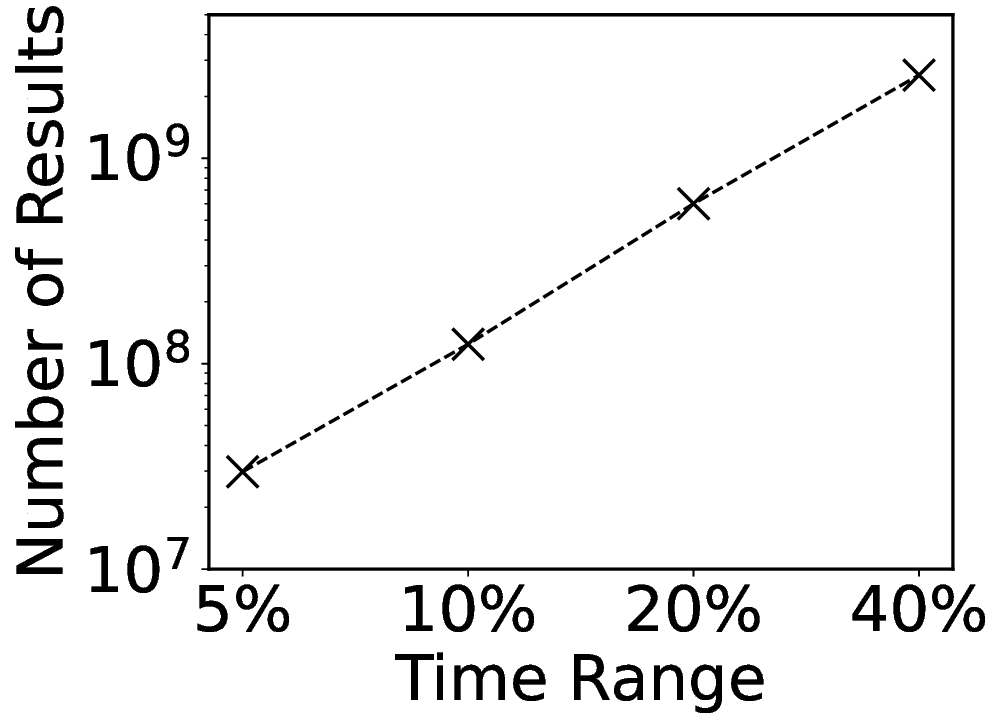}
    }
    \subfigure[Prosper]{
        \label{fig:prosper_varyt_res}
        \includegraphics[width=40mm]{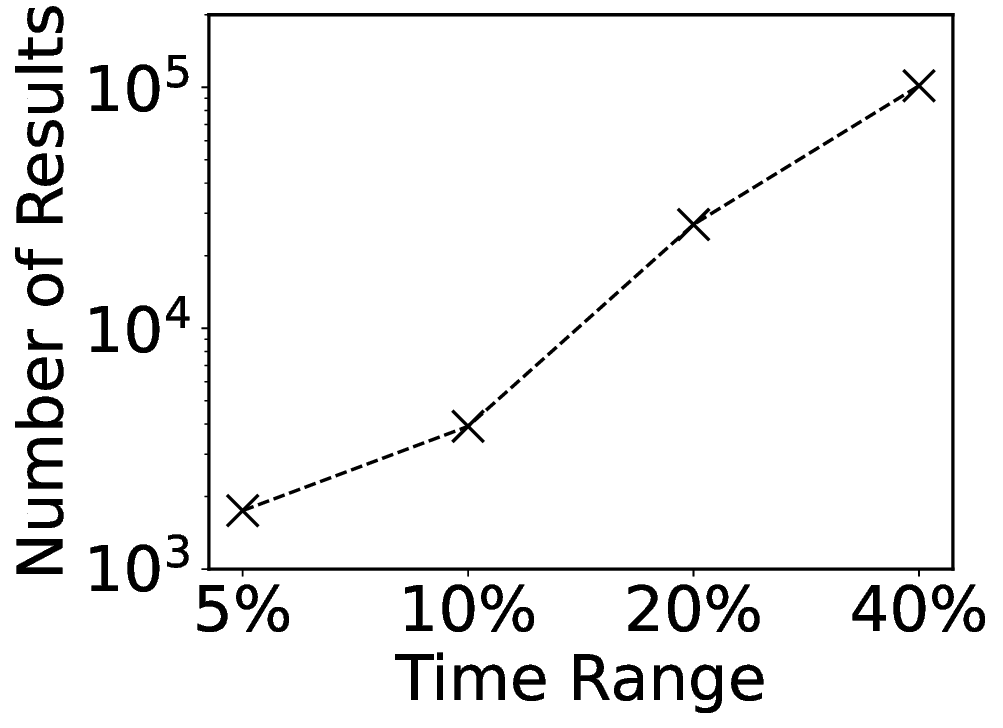}
    }
\caption{The average number of temporal $k$-cores on varying time range between 5\%, 10\%, 20\% and 40\% of $t_{max}$.}
\label{fig:varyt_res}
\end{figure*}

\subsection{Efficiency}

We compare the efficiency of our final algorithm with the baseline and the state-of-the-art OTCD algorithm, as shown in \reffig{otcd_comp}. The results demonstrate significant performance differences across all datasets. OTCD fails to complete within the time limit on five datasets with many edges and distinct timestamps, while the baseline completes only on smaller datasets or those with fewer timestamps. In contrast, our final algorithm consistently outperforms both, completing all datasets with a 2-4 order of magnitude improvement. For example, on FB, BO, and CM, it surpasses OTCD by 2-3 orders and the baseline by 3-4 orders, with even greater gains of up to 4 orders on EM, MC, and MF. Interestingly, datasets like WK, PL, and YT, despite their large edge counts, have shorter execution times due to fewer distinct timestamps. For instance, YT, with only 20 distinct timestamps at 10\% $t_{max}$, shows less drastic performance differences, but our algorithm still outperforms OTCD by an order of magnitude. This superior performance is due to our algorithm's efficient design, which eliminates redundant result verification across timestamps, streamlining computation. 
It is worth noting that the precomputation cost (computing core times, shown in blue in \reffig{otcd_comp}) is identical for the baseline and the advanced algorithm. On datasets with many distinct timestamps, the precomputation cost is relatively small, contributing only 1-10\% of the advanced algorithm's total runtime. However, for datasets with fewer timestamps, such as WK, PL, and YT, the precomputation step constitutes a larger fraction of the total runtime, reflecting the reduced time spent on enumeration in these cases.

\begin{figure*}[tbp]
\centering
 
   \includegraphics[scale=0.35]{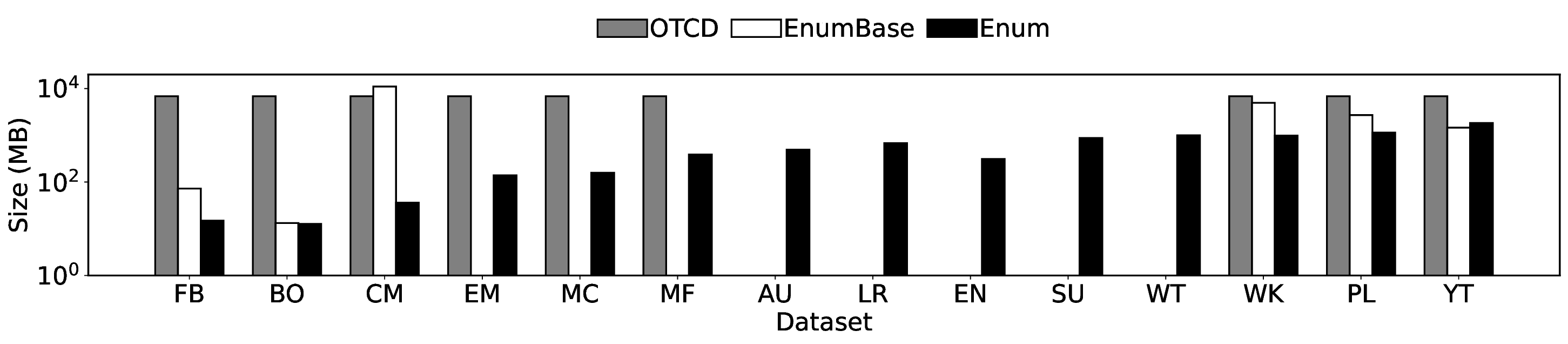}
\caption{The maximum residential memory of each algorithm run on the default parameter.}
\label{fig:memory}
\end{figure*}



\stitle {Varying $k$.} We selected four representative datasets for this experiment: CollegeMsg and Email as examples of small datasets, WikiTalk as a large dataset with a high number of timestamps, and Prosper as a large dataset with relatively few timestamps. We varied the value of $k$ from 10\% to 40\% of $k_{max}$ to assess the impact on running time, as illustrated in \reffig{varyk}. The results show that for CollegeMsg, Email, and WikiTalk, the running time consistently decreases as $k$ increases. This is because larger $k$ values result in fewer temporal $k$-cores, which reduces the computational burden. Precisely, for CollegeMsg and WikiTalk, the running time for 40\% of $k_{max}$ is approximately one-tenth of that for 10\% of $k_{max}$, while for Email, the reduction is even more significant, nearing a factor of one-hundredth. In contrast, the running time for Prosper remains relatively stable as $k$ varies. This stability is attributed to the small number of distinct timestamps in the Prosper dataset, which results in denser temporal $k$-cores that are less sensitive to changes in $k$. As $k$ increases from 10\% to 40\% of $k_{max}$, the running time decreases by only about 30\%, reflecting the dense and stable nature of the temporal $k$-cores in Prosper. The performance of OTCD and our baseline algorithm follows a similar trend, where the running time generally decreases as $k$ increases. Overall, the results indicate that our final algorithm exhibits a robust decrease in running time as $k$ increases.


\stitle{Varing time range.} We varied the query time ranges from 5\% to 40\% of $t_{max}$ across four selected datasets and reported the results in \reffig{varyt}. As expected, the running time increases significantly with larger time ranges due to the growth in the number of output results. The running time increased by 2 to 3 orders of magnitude across all datasets as the time range expanded from 5\% to 40\% of $t_{max}$. Notably, doubling the time range resulted in approximately a five times increase in running time for Prosper and nearly a ten times increase for WikiTalk. For Email, the increase in the time range from 5\% to 10\% of $t_{max}$ caused a drastic fifty times increase in running time; thereafter, the running time rose roughly ten times with each subsequent doubling of the time range. These observations underscore the significant impact of expanding the time window on both the number of results and the overall result size. OTCD exhibited a similar trend across all datasets but with even greater challenges. For CollegeMsg, it could not be completed within the time limit when the time range reaches 40\% of $t_{max}$. On the Email dataset, OTCD could only finish within the time limit for a 5\% time range. For WikiTalk, OTCD is unable to run for any time range due to the overwhelming number of results. Overall, these results highlight the scalability and efficiency challenges posed by larger time ranges, particularly for the baseline algorithm. In contrast, our final algorithm consistently manages to handle these increasing complexities more effectively, demonstrating its robustness and practical utility in large-scale temporal graph analysis.

\subsection{Number of Temporal $k$-cores} \reffig{num_res} shows the average number of temporal $k$-cores across datasets using default parameters. Datasets with more distinct timestamps, such as SU and WT, generate the most $k$-cores, while WK, PL, and YT, despite their large edge counts, produce fewer due to limited timestamps. \reffig{varyk_res} illustrates the effect of varying $k$ (10\%-40\% of $k_{max}$): the number of $k$-cores decreases as $k$ increases. For CollegeMsg and Email, results drop by 3-4 orders of magnitude from 30\% to 40\%, while Wikitalk and Prosper show a 2-order reduction from 10\% to 40\%. \reffig{varyt_res} highlights the impact of query time range (5\%-40\% of $t_{max}$): as the range grows, $k$-cores increase significantly, doubling by 2 orders of magnitude for CollegeMsg, Email, Wikitalk, and Prosper as the range expands.

\subsection{Memory Overhead}
\reffig{memory} compares the memory usage of each algorithm with default parameters. OTCD consistently incurs a memory cost of ~7 GB on datasets it completes, while the baseline consumes even more due to storing previously generated temporal $k$-cores for comparisons. In contrast, our final algorithm uses significantly less memory across all datasets, maintaining a footprint of under 2 GB, except for YT, which has very few distinct timestamps. This efficiency is achieved by avoiding the storage of previously generated $k$-cores or maintaining them as subgraphs, unlike OTCD. Notably, datasets like WK, PL, and YT show the highest memory overhead despite having fewer distinct $k$-cores, due to their denser cores with many edges sharing the same timestamps.


\section{Conclusion}
We proposed an efficient framework for enumerating temporal $k$-cores in a given time range of temporal graphs. By utilizing core times and introducing minimal core windows for edges, our algorithm achieves enumeration time proportional to the result size, optimal in practice. Experimental results show our approach outperforms existing methods by up to two orders of magnitude across diverse datasets.



\stitle{Future Works} The primary goal of this paper is to improve the efficiency of existing algorithms for enumerating temporal $k$-core edge sets, ensuring scalability for large temporal graphs. While our focus has been on edge set enumeration, we acknowledge the potential limitations of this approach in real-world applications. Representing $k$-cores as distinct vertex sets may be more practical and efficient, especially when edge set size poses computational challenges. In future work, we plan to develop techniques for enumerating temporal $k$-core vertex sets, offering a more compact representation and addressing vertex set overlap efficiently. We believe this direction will further enhance the practicality and applicability of temporal $k$-core analysis.


\newpage
\bibliographystyle{IEEEtran}
\bibliography{sample}

\begin{thebibliography}{10}
\providecommand{\url}[1]{#1}
\csname url@samestyle\endcsname
\providecommand{\newblock}{\relax}
\providecommand{\bibinfo}[2]{#2}
\providecommand{\BIBentrySTDinterwordspacing}{\spaceskip=0pt\relax}
\providecommand{\BIBentryALTinterwordstretchfactor}{4}
\providecommand{\BIBentryALTinterwordspacing}{\spaceskip=\fontdimen2\font plus
\BIBentryALTinterwordstretchfactor\fontdimen3\font minus \fontdimen4\font\relax}
\providecommand{\BIBforeignlanguage}[2]{{%
\expandafter\ifx\csname l@#1\endcsname\relax
\typeout{** WARNING: IEEEtran.bst: No hyphenation pattern has been}%
\typeout{** loaded for the language `#1'. Using the pattern for}%
\typeout{** the default language instead.}%
\else
\language=\csname l@#1\endcsname
\fi
#2}}
\providecommand{\BIBdecl}{\relax}
\BIBdecl

\bibitem{Seidman1983}
S.~B. Seidman, ``Network structure and minimum degree,'' \emph{Social Networks}, vol.~5, pp. 269--287, 9 1983.

\bibitem{Cui2014}
\BIBentryALTinterwordspacing
W.~Cui, Y.~Xiao, H.~Wang, and W.~Wang, ``Local search of communities in large graphs,'' \emph{Proceedings of the ACM SIGMOD International Conference on Management of Data}, pp. 991--1002, 2014. [Online]. Available: \url{https://dl.acm.org/doi/10.1145/2588555.2612179}
\BIBentrySTDinterwordspacing

\bibitem{Li2018}
R.~H. Li, J.~Su, L.~Qin, J.~X. Yu, and Q.~Dai, ``Persistent community search in temporal networks,'' \emph{Proceedings - IEEE 34th International Conference on Data Engineering, ICDE 2018}, pp. 797--808, 10 2018.

\bibitem{Zhang2010}
\BIBentryALTinterwordspacing
H.~Zhang, H.~Zhao, W.~Cai, J.~Liu, and W.~Zhou, ``Using the k-core decomposition to analyze the static structure of large-scale software systems,'' \emph{Journal of Supercomputing}, vol.~53, pp. 352--369, 8 2010. [Online]. Available: \url{https://www.researchgate.net/publication/220359026_Using_the_k-core_decomposition_to_analyze_the_static_structure_of_large-scale_software_systems}
\BIBentrySTDinterwordspacing

\bibitem{Cheng2011}
J.~Cheng, Y.~Ke, S.~Chu, and M.~T. Ozsu, ``Efficient core decomposition in massive networks,'' \emph{Proceedings - International Conference on Data Engineering}, pp. 51--62, 2011.

\bibitem{Khaouid2005}
\BIBentryALTinterwordspacing
W.~Khaouid, M.~Barsky, V.~Srinivasan, and A.~Thomo, ``K-core decomposition of large networks on a single pc,'' \emph{Proceedings of the VLDB Endowment}, vol.~9, pp. 13--23, 9 2015. [Online]. Available: \url{https://dl.acm.org/doi/10.14778/2850469.2850471}
\BIBentrySTDinterwordspacing

\bibitem{Montresor2011}
\BIBentryALTinterwordspacing
A.~Montresor, F.~D. Pellegrini, and D.~Miorandi, ``Distributed k-core decomposition,'' \emph{IEEE Transactions on Parallel and Distributed Systems}, vol.~24, pp. 288--300, 3 2011. [Online]. Available: \url{https://arxiv.org/abs/1103.5320v2}
\BIBentrySTDinterwordspacing

\bibitem{Wen2015}
\BIBentryALTinterwordspacing
D.~Wen, L.~Qin, Y.~Zhang, X.~Lin, and J.~X. Yu, ``I/o efficient core graph decomposition at web scale,'' \emph{2016 IEEE 32nd International Conference on Data Engineering, ICDE 2016}, pp. 133--144, 11 2015. [Online]. Available: \url{https://arxiv.org/abs/1511.00367v1}
\BIBentrySTDinterwordspacing

\bibitem{Chu2019}
\BIBentryALTinterwordspacing
L.~Chu, Y.~Zhang, Y.~Yang, L.~Wang, and J.~Pei, ``Online density bursting subgraph detection from temporal graphs,'' \emph{Proceedings of the VLDB Endowment}, vol.~12, pp. 2353--2365, 9 2019. [Online]. Available: \url{https://dl.acm.org/doi/10.14778/3358701.3358704}
\BIBentrySTDinterwordspacing

\bibitem{Lin2024}
\BIBentryALTinterwordspacing
L.~Lin, P.~Yuan, R.~H. Li, C.~Zhu, H.~Qin, H.~Jin, and T.~Jia, ``Qtcs: Efficient query-centered temporal community search,'' \emph{Proceedings of the VLDB Endowment}, vol.~17, pp. 1187--1199, 2024. [Online]. Available: \url{https://www.researchgate.net/publication/380341564_QTCS_Efficient_Query-Centered_Temporal_Community_Search}
\BIBentrySTDinterwordspacing

\bibitem{Starnini2021}
M.~Starnini, C.~E. Tsourakakis, M.~Zamanipour, A.~Panisson, W.~Allasia, M.~Fornasiero, L.~L. Puma, V.~Ricci, S.~Ronchiadin, A.~Ugrinoska, M.~Varetto, and D.~Moncalvo, ``Smurf-based anti-money laundering in time-evolving transaction networks,'' \emph{Lecture Notes in Computer Science (including subseries Lecture Notes in Artificial Intelligence and Lecture Notes in Bioinformatics)}, vol. 12978 LNAI, pp. 171--186, 2021.

\bibitem{Yang2023}
\BIBentryALTinterwordspacing
J.~Yang, M.~Zhong, Y.~Zhu, T.~Qian, M.~Liu, and J.~X. Yu, ``Scalable time-range k-core query on temporal graphs,'' \emph{Proceedings of the VLDB Endowment}, vol.~16, pp. 1168--1180, 1 2023. [Online]. Available: \url{https://dl.acm.org/doi/10.14778/3579075.3579089}
\BIBentrySTDinterwordspacing

\bibitem{Yu2021}
M.~Yu, D.~Wen, L.~Qin, Y.~Zhang, W.~Zhang, and X.~Lin, ``On querying historical k-cores,'' \emph{Proc. VLDB Endow.}, vol.~14, pp. 2033--2045, 2021.

\bibitem{Manurung2023}
\BIBentryALTinterwordspacing
J.~Manurung, P.~Sihombing, M.~A. Budiman, and Sawaluddin, ``Dynamic rumor control in social networks using temporal graph neural networks,'' \emph{2023 IEEE International Conference of Computer Science and Information Technology: The Role of Artificial Intelligence Technology in Human and Computer Interactions in the Industrial Era 5.0, ICOSNIKOM 2023}, 2023. [Online]. Available: \url{https://www.researchgate.net/publication/376847572_Dynamic_Rumor_Control_in_Social_Networks_Using_Temporal_Graph_Neural_Networks}
\BIBentrySTDinterwordspacing

\bibitem{Oettershagen2023}
\BIBentryALTinterwordspacing
L.~Oettershagen, A.~L. Konstantinidis, and G.~F. Italiano, ``Temporal network core decomposition and community search,'' 9 2023. [Online]. Available: \url{https://arxiv.org/abs/2309.11843v1}
\BIBentrySTDinterwordspacing

\bibitem{Liu2024}
\BIBentryALTinterwordspacing
T.~Liu, Q.~Cai, C.~Xu, B.~Hong, F.~Ni, Y.~Qiao, and T.~Yang, ``Rumor detection with a novel graph neural network approach,'' 3 2024. [Online]. Available: \url{https://arxiv.org/abs/2403.16206v3}
\BIBentrySTDinterwordspacing

\bibitem{Nguyen2021}
\BIBentryALTinterwordspacing
H.~Nguyen, T.~Nguyen, and D.~T. Nguyen, ``A graph-based approach for population health analysis using geo-tagged tweets,'' \emph{Multimedia Tools and Applications}, vol.~80, pp. 7187--7204, 2 2021. [Online]. Available: \url{https://link.springer.com/article/10.1007/s11042-020-10034-0}
\BIBentrySTDinterwordspacing

\bibitem{Serafino2021}
\BIBentryALTinterwordspacing
M.~Serafino, H.~S. Monteiro, S.~Luo, S.~D. Reis, C.~Igual, A.~S. Neto, M.~Travizano, J.~S. Andrade, and H.~A. Makse, ``Superspreading k-cores at the center of covid-19 pandemic persistence,'' \emph{PLoS Computational Biology}, vol.~18, 3 2021. [Online]. Available: \url{https://arxiv.org/abs/2103.08685v1}
\BIBentrySTDinterwordspacing

\bibitem{Galimberti2018}
\BIBentryALTinterwordspacing
E.~Galimberti, A.~Barrat, F.~Bonchi, C.~Cattuto, and F.~Gullo, ``Mining (maximal) span-cores from temporal networks,'' \emph{International Conference on Information and Knowledge Management, Proceedings}, pp. 107--116, 10 2018. [Online]. Available: \url{https://dl.acm.org/doi/10.1145/3269206.3271767}
\BIBentrySTDinterwordspacing

\bibitem{Li2021}
\BIBentryALTinterwordspacing
Y.~Li, J.~Liu, H.~Zhao, J.~Sun, Y.~Zhao, and G.~Wang, ``Efficient continual cohesive subgraph search in large temporal graphs,'' \emph{World Wide Web}, vol.~24, pp. 1483--1509, 9 2021. [Online]. Available: \url{https://link.springer.com/article/10.1007/s11280-021-00917-z}
\BIBentrySTDinterwordspacing

\bibitem{Ma2020}
S.~Ma, R.~Hu, L.~Wang, X.~Lin, and J.~Huai, ``An efficient approach to finding dense temporal subgraphs,'' \emph{IEEE Transactions on Knowledge and Data Engineering}, vol.~32, pp. 645--658, 4 2020.

\bibitem{Wu2015}
H.~Wu, J.~Cheng, Y.~Lu, Y.~Ke, Y.~Huang, D.~Yan, and H.~Wu, ``Core decomposition in large temporal graphs,'' \emph{2015 IEEE International Conference on Big Data (Big Data)}, pp. 649--658, 10 2015.

\bibitem{Qin2019}
H.~Qin, R.~H. Li, G.~Wang, L.~Qin, Y.~Cheng, and Y.~Yuan, ``Mining periodic cliques in temporal networks,'' \emph{Proceedings - International Conference on Data Engineering}, vol. 2019-April, pp. 1130--1141, 4 2019.

\bibitem{Qin2022}
H.~Qin, R.~H. Li, Y.~Yuan, G.~Wang, W.~Yang, and L.~Qin, ``Periodic communities mining in temporal networks: Concepts and algorithms,'' \emph{IEEE Transactions on Knowledge and Data Engineering}, vol.~34, pp. 3927--3945, 8 2022.

\bibitem{Bai2020}
W.~Bai, Y.~Chen, and D.~Wu, ``Efficient temporal core maintenance of massive graphs,'' \emph{Information Sciences}, vol. 513, pp. 324--340, 3 2020.

\bibitem{snap}
\BIBentryALTinterwordspacing
``Snap: Stanford network analysis project.'' [Online]. Available: \url{https://snap.stanford.edu/}
\BIBentrySTDinterwordspacing

\bibitem{konect}
\BIBentryALTinterwordspacing
``The konect project.'' [Online]. Available: \url{http://konect.cc/}
\BIBentrySTDinterwordspacing

\end{thebibliography}
\end{document}